\newcommand{\ignore}[1]{}
\theoremstyle{plain}
\newtheorem{lemma}{Lemma}[subsection]
\newtheorem{proposition}[lemma]{Proposition}
\newtheorem{theorem}[lemma]{Theorem}
\newtheorem{definition}[lemma]{Definition}
\newtheorem{corollary}[lemma]{Corollary}
\newtheorem{remark}[lemma]{Remark}
\newenvironment{proof}{\begin{trivlist}\item[]\hspace{\parindent}\textit{ Proof.}}
                      {\hfill \rule{.2em}{.2em}\end{trivlist}}
\newcommand{\regla}[3]{
\begin{array}{c}
  \prooftree #1
    \justifies #2
    \thickness=0.05em
    \using #3
  \endprooftree
\end{array}}
\newcommand{\dummy}{\ensuremath{\,\!}}
\newcommand{\reglaP}[3]{\regla{#1}{#2}{\laRegla{#3}}}
\newcommand{\conc}{\ensuremath{\cdot}}
\newcommand{\laRegla}[1]{\ensuremath{\mathsf{#1}}}
\newcommand{\imp}{\ensuremath{\Rightarrow}\xspace}
\newcommand{\lam}[2]{\ensuremath{\lambda #1 . #2}}
\newcommand{\lamp}[2]{\ensuremath{(\lam{#1}{#2})}} 
\newcommand{\lamm}[3]{\ensuremath{\lam{^{#1}#2}{#3}}}
\newcommand{\lamStar}[2]{\lamm{\star}{#1}{#2}}
\newcommand{\lamma}[2]{\lamm{a}{#1}{#2}}
\newcommand{\appm}[3]{@({#2}^{#1},{#3})}
\newcommand{\appma}[2]{\appm{a}{#1}{#2}}
\newcommand{\setIntersection}{\ensuremath{\cap}}
\newcommand{\sust}[2]{\ensuremath{\{#1 := #2\}}}
\newcommand{\sustLabel}[1]{\ensuremath{\dummy_{\{-#1\}}}}
\newcommand{\twoheadRightarrow}{\ensuremath{\;\Longrightarrow\!\!\!\!\!\!\!\!\Rightarrow}}
\newcommand{\superdevelops}[1]{\ensuremath{\overset{#1}{\Rightarrow}}}
\newcommand{\superdevet}[1]{\ensuremath{\overset{#1}{\Rightarrow_{\et}}}}
\newcommand{\superdevett}[1]{\ensuremath{\overset{#1}{\twoheadRightarrow_{\et}}}}
\newcommand{\est}[1]{\dummy^*_{#1}}
\newcommand{\posAt}[2]{\ensuremath{#1|_{#2}}}
\newcommand{\posRep}[3]{\ensuremath{#1[#2]_{#3}}}
\newcommand{\pos}[1]{\ensuremath{\textsf{pos}(#1)}}
\newcommand{\bPath}[1]{\ensuremath{\textsf{bp}(#1)}}
\newcommand{\disj}{\ensuremath{\uparrow}}
\newcommand{\freeVars}[1]{\ensuremath{\textsf{fv}(#1)}}
\newcommand{\boundVars}[1]{\ensuremath{\textsf{bv}(#1)}}
\newcommand{\terms}{\ensuremath{\Lambda}\xspace}
\newcommand{\termsEtiq}{\ensuremath{\Lambda\et}\xspace}
\newcommand{\et}{\ensuremath{\dummy_{\ell}}}
\newcommand{\sinEtiq}[1]{\ensuremath{|#1|}}
\newcommand{\conjVars}{\mathcal{V}}
\newcommand{\conjEtiq}{\ensuremath{\mathcal{L}}}
\newcommand{\red}{\ensuremath{\rightarrow}}
\newcommand{\weak}{\ensuremath{_{\ell}}}
\newcommand{\wred}{\ensuremath{\rightarrow\weak}}
\newcommand{\wredDisj}[1]{\ensuremath{\overset{#1}{\rightarrow}\weak}}
\newcommand{\wreddDisj}[1]{\ensuremath{\overset{#1}{\twoheadrightarrow}\weak}}
\newcommand{\wredd}{\ensuremath{\twoheadrightarrow\weak}}
\newcommand{\redEnc}[1]{\ensuremath{\overset{#1}{\rightsquigarrow}}}
\newlength{\widebarargwidth}
\newlength{\widebarargheight}
\newlength{\widebarargdepth}
\newcommand{\rred}[1]{\ensuremath{\twoheadrightarrow_{#1}}}
\newcommand{\dev}{\ensuremath{\Rightarrow}}
\newcommand{\igdef}{\ensuremath{\circeq}}
\newcommand{\contexto}[2]{\ensuremath{\langle#1, #2\rangle}}
\newcommand{\reduceRedex}[1]{\ensuremath{\overset{#1}{\red}}}
\newcommand{\termsMarcados}{\ensuremath{\Lambda^{\star}}}
\newcommand{\LC}{$\lambda$-calculus\xspace}
\newcommand{\weakLC}{$\lambda^w$-calculus\xspace}
\newcommand{\weakLCM}{$\lambda^w_{\star}$-calculus\xspace}
\newcommand{\weakLCL}{$\lambda^w_{\ell}$-calculus\xspace}
\newcommand{\fl}[1]{\ensuremath{\textsf{fl}(#1)}}
\title{Superdevelopments for Weak Reduction\thanks{Work partially supported by Instituto Tecnol\'ogico de Buenos Aires and LIFIA}}
\author{Eduardo Bonelli
\institute{CONICET and Universidad Nacional de Quilmes (Argentina)}
\email{ebonelli@unq.edu.ar}
\and
Pablo Barenbaum
\institute{Universidad de Buenos Aires (Argentina)
}
\email{\quad foones@gmail.com}
}
\begin{document}
\maketitle

\begin{abstract}
We study superdevelopments in the weak lambda calculus of \c{C}a\v{g}man and
Hindley, a confluent variant of the standard weak lambda calculus in
which reduction below lambdas is forbidden.  In contrast to
developments, a superdevelopment from a term $M$ allows not only
residuals of redexes in $M$ to be reduced but also some newly created
ones. In the lambda calculus there are three ways new redexes may be
created; in the weak lambda calculus a new form of redex creation is
possible. We present labeled and simultaneous reduction formulations of
superdevelopments for the weak lambda calculus and prove them
equivalent.
\end{abstract}



\section{Introduction}


In contrast to \LC, which allows reduction under the lambda, weak \LC
does not. This results in a calculus which is arguably more relevant
to programming languages given that the latter consider abstractions
as values. However, simply dropping the reduction scheme:
\begin{center}
$\regla{M\reduceRedex{}N}
       {\lam{x}{M}\red{}\lam{x}{N}}
       {\xi}
$
\end{center}
causes confluence to fail, as may be easily
verified. A restriction of the $\xi$-scheme recovers
confluence~\cite{CagmanHindley:1998,LevyMaranget:1999}. Here, the
judgement $M\reduceRedex{\Delta}N$ means ``$M$ reduces to $N$ by
contracting redex $\Delta$'' and $\freeVars{\Delta}$ are the free
variables of $\Delta$:
\begin{center}
$\regla{M\reduceRedex{\Delta}N
        \quad x\notin\freeVars{\Delta}}
       {\lam{x}{M}\reduceRedex{\Delta}\lam{x}{N}}
       {\xi'}
$\end{center}

The resulting weak $\lambda$ calculus (\weakLC) enjoys \textit{finite
  developments}: all developments are finite and end in the same
term. A development from a term $M$ is a reduction sequence in which
only residuals of redexes present in $M$ are reduced. In this paper we
study \textit{superdevelopments}~\cite{Aczel:1978,vanRaamsdonk:1993}
in the \weakLC.
A superdevelopment allows not only residuals of redexes in $M$ to be
reduced but also those upward created ones in the sense that the created redex occurs at a prefix of the reduced redex. There are three ways in
which a redex may be created in \LC~\cite{TesisLevy:1978}:
\begin{itemize}
\item [I.] $(\lam{x}{x})\,(\lam{y}{P})\,Q\reduceRedex{} (\lam{y}{P})\,Q$

\item [II.] $(\lam{x}{\lam{y}{P}})\,R\,Q\reduceRedex{} (\lam{y}{P\sust{x}{R}})\,Q$

\item [III.] $(\lam{x}{C[x\,Q]})\,\lam{y}{P}\reduceRedex{} C'[(\lam{y}{P})\,Q']$, where $C'=C\sust{x}{\lam{y}{P}}$ and $Q'=Q\sust{x}{\lam{y}{P}}$.

\end{itemize}
A superdevelopment from $M$ allows contraction of newly created
redexes of type I and II (i.e. upward creation). In \weakLC we meet two
differences. First, redex creation of type III is restricted to those
cases in which $Q$ does not have free occurrences of variables bound
in $C$ above the hole. Second, there is  a new way of creating
redexes (hence redex creation in \weakLC is not derived from that of \LC):
\begin{itemize}
\item [IV.] $(\lam{x}{C[(\lam{y}{P})\,Q]})R\,\reduceRedex{}
  C'[(\lam{y}{P'})\,Q']$, where $x\in \freeVars{(\lam{y}{P})\,Q}$
  and no free variables in $(\lam{y}{P})\,Q$ are bound in $C$;
  $C'=C\sust{x}{R}$ and $Q'=Q\sust{x}{R}$ and $P'=P\sust{x}{R}$.

\end{itemize}
In the reduction step $(\lam{x}{I\,x})\,y\reduceRedex{} \underline{I\,y}$,
where $I=\lam{x}{x}$, the underlined redex is a new redex of type IV.

We define weak superdevelopments (i.e. superdevelopments in \weakLC)
by means of an appropriate labeling scheme. Although attractive due to
its conciseness, this definition requires reducing terms to
\textit{normal form} by means of this notion of labeled
reduction. More convenient is a direct inductive
defi\-ni\-tion. Therefore, we introduce a notion of simultaneous
reduction, similar to simultaneous (a.k.a. parallel) reduction in
\LC. The topic of this paper is to exhibit the complications arising
when one tries to prove these notions equivalent and the approach we
take to resolve these issues.

\ignore{

We then address higher-order rewriting. First a notion of orthogonal
higher-order rewrite system is introduced which is suited for weak
reduction. For example, the following PRS is orthogonal for
weak reduction whereas it is {\em not} for strong reduction:
\begin{center}
$\begin{array}{rcl}
g(\lam{x}{f\,x}) & \red{} & a\\
f\,y & \red{} & b
\end{array}$
\end{center}
The notion of superdevelopment for weak reduction in higher-order
rewriting is more subtle than in the lambda calculus. When extending (\ref{eqn:superDev}) two issues arise. 

[Tema de identificar varibales que se sustituyen]

We show why the naive extension fails and motivate the need for
tracking how redexes contribute to the creation of other redexes. In
particular, in the setting of \weakLC, it turns out that all redexes
involving $x$ that are reduced in the judgement $M\dev^L \lam{x}{M'}$
of  do not contribute to the creation of the head
redex. In higher-order rewriting this may not be the case. In other
words, not all redexes involving

[Tema de no permitir todos los redexes que involucren x]

} 

\textbf{Motivation.} The starting point of this work is an attempt at
extending the concept of orthogonal systems and the confluence results
of Mayr and Nipkow~\cite{MayrNipkow:1998} to weak higher-order
rewriting. Orthogonality depends on whether weak or strong reduction
is considered. For eg. in weak reduction $\beta\eta$ as an
HRS~\cite{MayrNipkow:1998} has only one critical pair (in contrast to
the two critical pairs that arise under strong reduction). Moreover,
some systems such as $\{ g(\lambda x.f(x)) \red a, f(y)\red b\}$ are
not orthogonal for strong reduction but are for weak
reduction. Another difference between weak and strong reduction lies
in standardization~\cite{TERESE}. Standardization results for strong
reduction in higher-order rewriting, such as HRS, apply to
left-linear, pattern HRS that are \textit{fully-extended} (roughly
that redexes behave as in the first-order case - they are determined
exclusively by their term structure). For instance, the first rule in
the HRS $\{ f(\lambda x.y) \red g(y), h(x)\red a \}$ is not fully
extended since in order to determine if a term of the form $f(\lambda
x.M)$ is a redex, it must be checked whether $x$ occurs free in $M$ or
not. Fully-extendedness is needed in order for anti-standard pairs to
be swappable and hence non-standard reductions to be standardized. For
e.g. in $f(\lambda x.h(x)) \red f(\lambda x.a)\red g(a) $ the first
and the second steps cannot be swapped. Note, however, that in weak
reduction fully-extendedness is not required: indeed, the first
rewrite step in the derivation is not a valid weak step. Returning to
our initial motivation on confluence for orthogonal weak HRS, the
proof of Mayr and Nipkow~\cite{MayrNipkow:1998} resorts to Aczel's
notion of superdevelopments (but for HRS). As will be explained in
this work, a formalization of weak superdevelopments in terms of
simultaneous reduction for HRS is involved. The reason is that it
requires allowing some (but not all) redexes to be reduced
\textit{under} abstractions. For eg. in the HRS $\{ f (\lam{x}{g(y\,
  x)}) \red{} y\,a, k(x)\red{} x \}$, the term $f (\lam{x}{g(k(x))})$
weakly superdevelops to $a$ (note that the redex $k(x)$ is allowed to
be reduced), however $f (\lam{x}{k(g(x))})$ does not weakly
superdevelop to $a$ (the redex $k(g(x))$ is not allowed to be
reduced). Therefore, in order to get a clear grasp of the problem we
choose to first address this task for the \weakLC. It should be
mentioned that proofs of confluence for weak, orthogonal HRS that rely
on developments rather than superdevelopments should go
through. However, a number of applications of superdevelopments to
confluence, normalisation and higher-order matching, as discussed
below, suggest that the concept of superdevelopment deserves to be
studied in its own right.

\textbf{Related work.}  
According to \c{C}a\v{g}man and
Hindley~\cite{CagmanHindley:1998} weak reduction, as presented in this
work, is due to Howard~\cite{Howard:1968}. It arises as an attempt to
construct a tighter correspondence between reduction in Combinatory
Logic and $\beta$ reduction. \c{C}a\v{g}man and
Hindley~\cite{CagmanHindley:1998} give a clear account of \weakLC\ and
its relation to $\beta$ reduction. L\'evy and
Maranget~\cite{LevyMaranget:1999} study a calculus of explicit
substitutions for \weakLC, stating that the theory of weak reduction
is rather poorly developed in the literature. In a sequel
paper~\cite{BLM:2005} they introduce a labeled variant in order to
study sharing for this calculus.
Notions of weak reduction for calculi with explicit substitutions have
been studied by Fern\'andez et al~\cite{FMS05a,FMS05b}. The latter
considers two variants of the $\mathit{Beta}$ rule, one in which the
argument is required to be closed and one in which the function is
required to be closed. This suggests other variations on weak
reduction for HRS, although implementation concerns for HRS are out of
the scope of this work.
Superdevelopments were introduced by Aczel to prove confluence of a
generalization of lambda calculus~\cite{Aczel:1978}. Van
Raamsdonk~\cite{vanRaamsdonk:1993} proves finiteness of
superdevelopments and confluence of orthogonal CRS by adapting Aczel's
technique. Two additional different proofs of finiteness are given
in~\cite{TesisVanRaamsdonk:1996}.  Mayr and Nipkow use a similar
technique to prove confluence of orthogonal
PRS~\cite{MayrNipkow:1998}.  Another application of superdevelopments
is in higher-order matching. This problem is usually stated in the
setting of typed lambda calculus. In order to obtain decidable
subclasses of this problem, terms are usually restricted to some
particular order. An alternative approach in restricting the problem
is to weaken the reduction relation from reduction to normal form to
superdevelopments. de Moor and
Sittampalam~\cite{MS:1998,MS:2001,SM:2001,Faure:2006} study matching
modulo superdevelopments. Khasidashvili and Piperno~\cite{KP:1998}
show that the amount of superdevelopments required to normalize
certain classes of terms can be determined statically.

\textbf{Preliminaries.} Assume given a denumerably infinite set of term variables $\conjVars$.
The set of (\weakLC) terms \terms and contexts are defined as follows:
\begin{center}
$M  ::=  x\, |\, M\, M\, |\, \lam{x}{M}$\hspace{1cm}
$C  ::=  \Box\, |\, C\,M\, |\, M\, C\, |\, \lam{x}{C}$
\end{center}
 Free (\freeVars{M}) and bound (\boundVars{M}) variables are as usual.
 We assume the convention that bound variables are different from free
 variables and, moreover, bound variables of distinct binders have
 been renamed apart.  Capture avoiding substitution of all free
 occurrences of $x$ in $M$ by $N$ is written $M\sust{x}{N}$. In a
 statement in which distinct variables $x_1, \hdots, x_n$ occur we use
 $\overline{x}$ for the set $\{x_1, \hdots, x_n\}$.  We write $C[M]$
 for the result of replacing the hole in $C$ with $M$ (this may bind
 the free variables of $M$ in $C[M]$). The binding path of $C$,
 $\bPath{C}$, is the sequence of variables that are bound in $C$ above
 the hole ($\igdef$ is definitional equality): $\bPath{\Box}\igdef
 \epsilon$, $\bPath{C\, N}\igdef\bPath{C}$, $\bPath{M\,
   C}\igdef\bPath{C}$ and $\bPath{\lam{x}{C}}\igdef\bPath{C}\conc
 x$. A position ($p,q,r$) is a sequence of positive integers;
 $\epsilon$ is the root position (empty sequence) and $p\conc q$ is
 the (associative) operation of sequence composition. If $P$ is a set
 of positions we write $p\conc P$ for the set resulting from composing
 $p$ with each position in $P$.  We write \pos{M} for the set of
 positions of $M$: $\pos{x} \igdef \{\epsilon\}$, $\pos{M\, N} \igdef
 (0\conc \pos{M})\cup (1\conc\pos{N})$, $\pos{\lam{x}{M}} \igdef
 1\conc \pos{M}$. The subterm of $M$ at position $p$ is \posAt{M}{p}.
 Also, \posRep{M}{N}{p} stands for the term resulting from replacing
 the subterm at position $p$ in $M$ with $N$ (this may bind the free
 variables of $N$ in \posRep{M}{N}{p}).  We write \rred{} for the
 reflexive--transitive closure of a binary relation \red{}.  We write
 $S,T$ for sequences of variables and $U,V$ for sets of
 variables. Also, $S \oplus T$ is the sequence resulting from
 concatenating $S$ with $T$. We say a term $M$ is \textit{away from} a
 sequence of variables $S$ and write $M \disj S$ if $\freeVars{M}$ is
 disjoint from $S$.  We write $S\subseteq T$ to indicate that the
 underlying set of $S$ is included in that of $T$.

\textbf{Structure of the paper.} Sec.~\ref{sec:redexCreation}
proves that the above mentioned redex creation types are the only
possible ones. Sec.~\ref{sec:definingSuperdevelopments} introduces two
definitions of weak superdevelopments in \weakLC. Sec.~\ref{sec:equivalence} addresses the proof of equivalence of these. Finally, we
conclude and report on our ongoing work on extensions to higher-order
rewriting.



\section{Redex Creation in \weakLC}
\label{sec:redexCreation}


This section characterizes redex creation in \weakLC.  In order to
follow redexes along reductions we mark them (with a star) and only
allow such marked redexes to be contracted. The set of terms
\termsMarcados\ and contexts of the resulting formalism (\weakLCM) are
defined as follows:
\begin{eqnarray*}
M & ::= & x\, |\, M\, M\, |\, \lam{x}{M}\, |\, (\lamm{\star}{x}{M})\,M\\
C & ::= & \Box\, |\, C\, M\, |\, M\, C\, |\, \lam{x}{C}\, |\, (\lamm{\star}{x}{C})\,M\, |\, (\lamm{\star}{x}{M})\,C
\end{eqnarray*}
The set of positions and binding path is extended accordingly. In the
sequel of this subsection, when we speak of ``terms'' we mean ``marked
terms'' and likewise for contexts. If $M$ is a term and $p \in
\pos{M}$, then we write $\contexto{M}{p}$ for the context resulting
from replacing the term at $p$ in $M$ with a hole.  If the
hole in $C$ is at position $p$ we write $C[\,]_p$.

As mentioned, reduction in \weakLCM is similar to reduction in \weakLC except only marked redexes may be contracted.

\begin{center}
$\begin{array}{c}
\reglaP{\Delta = (\lamStar{x}{M})\,N}
       {(\lamStar{x}{M})\,N \reduceRedex{\Delta} M \sust{x}{N}}
       {}
\hspace{1cm}
\reglaP{M \reduceRedex{\Delta} M'
        \quad
        \Delta\disj\bPath{C}}
       {C[M] \reduceRedex{\Delta} C[M']}
       {}
\end{array}
$\end{center} 

We are interested in studying situations where reduction in \weakLCM,
from terms where all redexes have been marked, produces terms having
occurrences of $(\lam{x}{M})\,N$ in contexts with binding paths that
they are away from. This models the situation where a new reducible
expression, a \weakLC\ redex, has been produced that was not initially
marked.

\begin{definition}
Let $M,N,P \in \termsMarcados$, $p \in \pos{M}$ and $\posAt{M}{p} \disj\bPath{\contexto{M}{p}}$. 
\begin{itemize}
\item If $\posAt{M}{p} = (\lamStar{x}{P})\,N$, then we say \posAt{M}{p}
is a {\em \weakLCM\ redex at $(M,p)$}.

\item If $\posAt{M}{p} = \lamp{x}{P}N$, then we say \posAt{M}{p} is a
{\em \weakLC\ redex at $(M, p)$}.
\end{itemize}
\end{definition}

A term $M \in \termsMarcados$ is {\em initially marked} iff the
set of marked subterms are indeed \weakLCM\ redexes and all \weakLC\ redexes have
been marked. More precisely, iff the following conditions hold:

\begin{enumerate}
\item $\forall p \in \pos{M}.\posAt{M}{p} = (\lamStar{x}{P})\,Q$ implies
  \posAt{M}{p} is a \weakLCM\ redex at $(M, p)$.
 
\item $\forall p \in \pos{M}.\posAt{M}{p} = \lamp{x}{P}\,Q$ implies
  \posAt{M}{p} is not a \weakLC\ redex at $(M, p)$.

\end{enumerate}

The following result is proved by case analysis on the relative positions of $p$ and $q$.

\begin{proposition}[Redex creation]
\label{prop:onRedexCreation}
Let $M \in \termsMarcados$ be initially marked, $M
\reduceRedex{\Delta} N$ and $q \in \pos{N}$ s.t. there is a
\weakLC\ redex at $(N, q)$.  One of the following situations must
arise:
\begin{itemize}
\item \textbf{Case I}: $M = C[(\lamStar{x}{x})\,\lamp{y}{M_1}M_2]$ and $N =
  C[\lamp{y}{M_1}\,M_2]_q$.

\item \textbf{Case II}: $M = C[(\lamStar{x}{\lamp{y}{M_1}})\, Q\, M_2]$ and $N =
  C[\lamp{y}{M_1'}M_2]_q$, where $M_1' = M_1\sust{x}{Q}$.

\item \textbf{Case III}: $M = C_1[(\lamStar{x}{C_2[x
      M_2]})\,\lam{y}{M_1}]$ and $N =
  C_1[C_2'[\lamp{y}{M_1}M_2']_{q_2}]_{q_1}$, where $q=q_1\conc q_2$, $C_2'
  = C_2\sust{x}{\lam{y}{M_1}}$ and $M_2' = B\sust{x}{\lam{y}{M_1}}$.

\item \textbf{Case IV}: $M = C_1[(\lamStar{x}{C_2[\lamp{y}{M_1}M_2]})\,Q]$
  and $N = C_1[C_2'[\lamp{y}{M_1'}M_2']_{q_2}]_{q_1}$, where $q=q_1\conc
  q_2$, $M_1' = M_1\sust{x}{Q}$, $M_2' = M_2\sust{x}{Q}$, $C_2' =
  C_2\sust{x}{Q}$ and $x \in \freeVars{\lamp{y}{M_1'}M_2'}$.

\end{itemize}
\end{proposition}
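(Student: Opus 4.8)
The plan is a case analysis on the relative position $p$ of the contracted redex $\Delta = (\lamStar{x}{M_0})\,N_0$ in $M$ (so that $\posAt{N}{p} = M_0\sust{x}{N_0}$) and the position $q$ of the freshly exposed \weakLC\ redex $\posAt{N}{q} = \lamp{y}{P}R$. The three top-level cases are that $p$ and $q$ are incomparable, that $q$ is a proper prefix of $p$, and that $q$ is at or below $p$. Throughout, the engine is an \emph{obstruction principle}: whenever a subterm $\posAt{M}{q'} = \lamp{y}{P'}R'$ is \emph{unmarked}, condition~2 of initial markedness says that it is not a \weakLC\ redex at $(M,q')$, so some variable $z$ lies in $\freeVars{\lamp{y}{P'}R'} \cap \bPath{\contexto{M}{q'}}$; the point is that, as long as $z \neq x$, this $z$ stays among the free variables of, and inside the binding path above, the image of that subterm in $N$, and so continues to witness that this image is not a \weakLC\ redex. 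This rests on three facts: $\beta$-contraction introduces no free variables; $\freeVars{M_0\sust{x}{N_0}} \subseteq \freeVars{\Delta}$; and $\Delta\disj\bPath{\contexto{M}{p}}$, which is the side condition of the reduction rule. A small supporting lemma is that replacing a subterm by one with no new free variables cannot delete free-variable occurrences that lie outside that subterm.

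Suppose first that $p$ and $q$ are incomparable. Then $\posAt{N}{q} = \posAt{M}{q}$ and $\bPath{\contexto{N}{q}} = \bPath{\contexto{M}{q}}$, so the obstruction for $\posAt{M}{q}$ persists and contradicts that $\posAt{N}{q}$ is a \weakLC\ redex at $(N,q)$. Suppose next that $q$ is a proper prefix of $p$; then the contracted redex sits inside $\posAt{N}{q} = \lamp{y}{P}R$. If it sits strictly inside the function $\lamp{y}{P}$ or inside the argument $R$, then $\posAt{M}{q}$ is already an unmarked application of the same outer shape, and its obstruction survives the internal contraction — a contradiction. The only remaining possibility is $p = q\conc 0$, i.e.\ the contracted redex \emph{is} the function $\lamp{y}{P}$; then $M_0\sust{x}{N_0}$ is an abstraction, which forces either $M_0 = \lamp{y}{M_1}$ — this is \textbf{Case~II}, with $P = M_1\sust{x}{N_0}$ — or $M_0 = x$ and $N_0 = \lamp{y}{M_1}$ — this is \textbf{Case~I}. (The sub-case $p = q\conc 1$ is excluded in the same way.)

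Finally, suppose $q$ is at or below $p$, and write $q = p\conc q'$ with $q' \in \pos{M_0\sust{x}{N_0}}$. If $q'$ lies within a substituted copy of $N_0$ — including the degenerate case $\posAt{M_0}{q'} = x$ with the whole copy exposed — then $\posAt{N}{q}$ is a subterm of $N_0$, hence of $M$, and the obstruction that condition~2 provides for it in $M$ reappears in $N$ (using $\freeVars{N_0} \subseteq \freeVars{\Delta}$, $\Delta\disj\bPath{\contexto{M}{p}}$, and that the binders above $q$ in $N$ extend those above $p$) — a contradiction. Otherwise $\posAt{M_0}{q'} \neq x$ and $\posAt{N}{q} = (\posAt{M_0}{q'})\sust{x}{N_0}$; for this to be an unmarked application $\lamp{y}{P}R$, the head of $\posAt{M_0}{q'}$ must be either an abstraction $\lamp{y}{M_1}$, or the variable $x$ with $N_0 = \lamp{y}{M_1}$. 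In the first sub-case, $\posAt{M_0}{q'} = \lamp{y}{M_1}M_2$ is already an unmarked application of that shape in $M$, so it carries an obstruction $z$; comparing $\bPath{\contexto{M}{p\conc 0\conc 1\conc q'}}$ with $\bPath{\contexto{N}{q}}$ — which, as sets, differ exactly in $x$ (the extra binder removed by the contraction) — the obstruction principle forces $z = x$, i.e.\ $x \in \freeVars{\lamp{y}{M_1}M_2}$, which is the side condition of \textbf{Case~IV}, with $C_2 = \contexto{M_0}{q'}$. The second sub-case is \textbf{Case~III}. In both, the outer decompositions $M = C_1[\cdots]$ and $N = C_1[\cdots]$ are read off with $C_1 = \contexto{M}{p}$ and $C_2 = \contexto{M_0}{q'}$, using that substitution commutes with filling a skeleton context (modulo the variable convention).

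The main obstacle will be the binding-path bookkeeping in the last case: one must check that descending from $p$ into the contractum adds exactly the binder $x$ relative to the matching path in $M$ (no copy of $N_0$ contributes a binder, because $q'$ is a skeleton position), and that the variable convention keeps the free variables of $N_0$ out of $\bPath{\contexto{M_0}{q'}}$; together these pin the obstruction down to $x$ and thereby deliver the side condition of Case~IV. The rest is careful but routine manipulation of positions, substitution, and $\beta$-contraction.
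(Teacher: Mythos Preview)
Your proposal is correct and follows essentially the same route as the paper: a three-way case split on the relative positions of the contracted redex $p$ and the new \weakLC\ redex $q$ (disjoint; $p$ strictly below $q$; $q$ at or below $p$), with the disjoint case and the ``deep'' sub-cases dispatched by showing the would-be redex already occurs in $M$, and the boundary sub-cases yielding Cases~I--IV. Your write-up is more explicit than the paper's about the binding-path bookkeeping (your ``obstruction principle'') and in particular about how the side condition $x\in\freeVars{(\lam{y}{M_1})M_2}$ of Case~IV is forced, which the paper's sketch leaves implicit.
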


\ignore{ 

\begin{proof}
We consider three cases based on the relative positions of $p$
and $q$. Let $\posAt{M}{p} = \Delta = (\lamStar{x}{P})\,Q$ and
$\posAt{N}{q} = \lamp{y}{A}\,B$ be the aforementioned \weakLC\ redex.

\begin{enumerate}

\item $p$ and $q$ are disjoint. In this case, the \weakLC\ redex in $N$ is
noted to already be present in $M$ thus contradicting that $M$ is
initially marked.

\item $q$ is below $p$ (i.e. $q = p \conc p'$, for some $p'$). Fix $N
  \igdef \posRep{M}{P\sust{x}{Q}}{p}$ and observe that
  $\posAt{P\sust{x}{Q}}{p'} = \lamp{y}{A}\,B$. First we note that the
  \weakLC\ redex $\lamp{y}{A}\,B$ cannot already appear in $P$ nor in $Q$
  or else it is already present in $M$ and hence contradicts that $M$
  is initially marked. Here we have to be careful that the condition
  on free variables is met in order to deduce that the \weakLC\ redex
  already existed in $M$. Second, in order that
  $\posAt{P\sust{x}{Q}}{p'} = \lamp{y}{A}B$ the only remaining
  possibility is that $\posAt{P}{p'} = R$ with $R\sust{x}{Q} =
  \lamp{y}{A}B$ and $R$ is not a variable nor an abstraction. Here we
  infer that $R$ is an application $R_1\,R_2$ with either $R_1=x$ and
  Case III applies or $R_1 = \lam{y}{A_0}$, with $A_0\sust{x}{Q} = A$,
  and Case IV applies.

\item $p$ is below $q$ (i.e. $p = q \conc q'$, for some $q'$).  That
  is, $P\sust{x}{Q} = \posAt{N}{p} = \posAt{N}{q \conc q'} =
  \posAt{\posAt{N}{q}}{q'} = \posAt{(\lamp{y}{A}B)}{q'}$. The case
  where $q'=\epsilon$ is already dealt with above. This leaves three
  possible cases: that $P\sust{x}{Q}$ is in $A$, that it is in $B$ or
  that $P\sust{x}{Q}=\lamp{y}{A}$. In the first two we argue by
  contradicting that $M$ is initially marked. In the last one we
  deduce that $P$ must have one of two forms:
  \begin{itemize}
  \item $P = \lam{y}{A_0}$, with $A_0\sust{x}{Q} = A$ and Case II applies.
  \item $P = x$ and $Q = \lam{y}{A}$ and Case I applies.
  \end{itemize}

\end{enumerate}

\end{proof}

} 



\section{Superdevelopments in \weakLC}
\label{sec:definingSuperdevelopments}

This section introduces two presentations of superdevelopments in
\weakLC: via labeled reduction
(Sec.~\ref{subsec:developmentsViaLabeledReduction}) and simultaneous
reduction (Sec.~\ref{subsec:developmentsViaParallelReduction}).


\subsection{Weak Superdevelopments via Labeled Reduction}
\label{subsec:developmentsViaLabeledReduction}

We begin by introducing the labeled \weakLC (\weakLCL).  Assume given
a denumerably infinite set of {\em labels} $\conjEtiq$. The labeled
terms \termsEtiq and contexts are given by the following grammar:
\begin{center}
$A  ::=  x\, |\, \lamma{x}{A}\, |\, \appma{A}{A}$\hspace{1cm}
$C  ::=  \Box\, |\, \lamma{x}{C}\, |\, \appma{C}{A}\, |\, \appma{A}{C}$
\end{center}
where $x \in \conjVars$ and $a \in \conjEtiq$.
Ocassionally, we write $\lamm{a_1 \hdots a_n}{x_1 \hdots
    x_n}{A}$ (or simply $\lamm{\overline{a}^n}{\overline{x}^n}{A}$) as
  a shorthand for
  $\lamm{a_1}{x_1}{\lamm{a_2}{x_2}{\hdots\lamm{a_n}{x_n}{A}}}$.
 If we wish to single out the leftmost abstraction we write
  $\lamm{a\,\overline{a}^n}{x\,\overline{x}^n}{A}$.   
Thus abstractions and (the first argument of) applications are decorated with labels. In $\appma{A}{B}$ the depicted label binds all the occurrences of $a$ in $A$.  The set of free labels of a labeled term
  is defined as follows:
    \begin{eqnarray*}
    \fl{x} & \igdef & \emptyset\\
    \fl{\lamma{x}{A}} & \igdef & \{a\} \cup \fl{A}\\
    \fl{\appma{A}{B}} & \igdef & \fl{A}\setminus \{a\}\cup\fl{B}
    \end{eqnarray*}
We assume the existence of a distinguished label
$\star\in\conjEtiq$ which is never bound. Also, $M\sustLabel{a}$
denotes the substitution of all free occurrences of label $a$ with
$\star$, and $\sinEtiq{A} \in \terms$ is the term resulting from $A$
by erasing all labels (and identifying $\appm{}{A}{B}$ with $A\,B$),
in which case we say $A$ is a {\em labeling} of
$\sinEtiq{A}$. Substitution over labeled terms is written
$A\sust{x}{B}$. Note that this operation must not capture labels. For
example, $\appm{a}{x}{x}\sust{x}{\lamma{y}{y}} \neq
\appm{a}{(\lamma{y}{y})}{(\lamma{y}{y})}$, rather the binding
occurrence of $a$ in $\appm{a}{x}{x}$ must first be renamed to
$\appm{b}{x}{x}$ in order for substitution to yield
$\appm{b}{x}{x}\sust{x}{\lamma{y}{y}}=
\appm{b}{(\lamma{y}{y})}{(\lamma{y}{y})}$. The binding path of a
labeled context $C$ is the binding path of \sinEtiq{C}.

\begin{remark}
Labeled characterizations of superdevelopments we know of do not bind
labels in applications. Instead they define a term to be
\textit{good}~\cite{vanRaamsdonk:1993}\footnote{This work decorates
  applications with \textit{sets} of labels, however the example of
  this remark still applies by considering $\{a\}$ instead of $a$
  where appropriate.} or
\textit{well-labeled}~\cite{TesisVanRaamsdonk:1996,Faure:2006} if an
occurrence of a label $a$ decorates an application and another
occurrence of $a$ decorates an abstraction, then this abstraction must
occur in one of the arguments of the application. It is stated that
reduction preserves well-labeledness, however this in fact
fails. Eg. taking $A\igdef \appm{a}{(\lamm{a}{y}{y})}{z}$, clearly the
reduction step $\appm{c}{(\lamm{c}{x}{\appm{b}{x}{x}})}{A}\red{}
\appm{b}{A}{A}$ produces a non well-labeled term. Note that the
results in op.cit. still hold (except for preservation of
well-labeledness under reduction, as illustrated) since, by
well-labeledness, these copies of $A$ never interact with one another.
\end{remark}


A term $A$ is said to be {\em{initially labeled}} iff
all abstractions have distinct labels.
    Note that, given a term $A$, it
    is always possible to produce a label $a$ that does not occur in it
    given that $A$ is finite and $\conjEtiq$ is not.  
Reduction in \weakLCL is
defined below, where $S$ is a sequence of variables and the depicted
occurrence of $\appma{(\lamma{x}{A})}{B}$ is called a {\em redex}.
\begin{center}
$\regla{\appma{(\lamma{x}{A})}{B} \disj \bPath{C}\oplus S}
       {C[\appma{(\lamma{x}{A})}{B}] \wredDisj{S} C[A\sustLabel{a}\sust{x}{B}]}
       {}
$\end{center}

We substitute $a$ with $\star$ to avoid bound labels from becoming
free as in $\appma{(\lamma{x}{\lamma{y}{y}})}{w} \wredDisj{S}
\lamma{y}{y}$ and, consequently to avoid rebinding
of labels such as in
$\appma{\underline{\appma{(\lamma{x}{\lamma{y}{y}})}{w}}}{z}\wredDisj{S}
\appma{\lamma{y}{y}}{z}$. We write $\wred$ for $\wredDisj{\emptyset}$.
Note that $A \wredDisj{S} B$ and $T \subseteq S$ implies $A
\wredDisj{T} B$. Also, reduction does not create free variables. In
the judgement $A \wredDisj{S} B$ we implicitly assume that $A,B\in
\termsEtiq$.

%

\begin{definition}
$M$ {\em weakly superdevelops to} $N$ if there exists an
initially labeled term $A$ and a labeled term $B$ s.t.
$\sinEtiq{A}=M$ and $\sinEtiq{B}=N$ and $A\wredd B$.  If, moreover,
$B$ is in $\wred$-normal form, then we say there is a {\em complete}
weak superdevelopment from $M$ to $N$.
\end{definition}

Some properties of labeled reduction are considered below. Since weak
superdevelopments are also superdevelopments (which are
finite~\cite{vanRaamsdonk:1993,TesisVanRaamsdonk:1996}):

\begin{lemma}
Weak superdevelopments are finite.
\end{lemma}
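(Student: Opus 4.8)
The plan is to reduce the finiteness of weak superdevelopments to the known finiteness of ordinary superdevelopments in the \LC, which the paper has explicitly quoted as available (\cite{vanRaamsdonk:1993,TesisVanRaamsdonk:1996}). The intended argument, as the parenthetical remark in the statement already hints, is that a weak superdevelopment is a particular case of a superdevelopment: every $\wred$-reduction sequence is, after erasing labels, a $\beta$-reduction sequence in the unrestricted \LC, and the redexes contracted along a weak superdevelopment are exactly those tracked by the labeling; since the labeling scheme here is (a restriction of) the labeling scheme used to define superdevelopments, any infinite weak superdevelopment would project to an infinite superdevelopment, contradicting finiteness of the latter.

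\textbf{Step 1: erasure is a simulation.} First I would observe that if $A \wredDisj{S} B$ then $\sinEtiq{A} \to_\beta \sinEtiq{B}$ (or $\sinEtiq{A} = \sinEtiq{B}$ only in degenerate cases, which do not occur since a redex is genuinely contracted): the $\wred$ rule contracts $\appma{(\lamma{x}{A})}{B}$ to $A\sustLabel{a}\sust{x}{B}$, and erasing labels turns this into the $\beta$-step $(\lam{x}{\sinEtiq{A}})\,\sinEtiq{B} \to \sinEtiq{A}\sust{x}{\sinEtiq{B}}$, with the context and the side condition $\bDisj{\cdot}{\cdot}$ simply dropped. Hence an infinite sequence $A_0 \wred A_1 \wred A_2 \wred \cdots$ erases to an infinite $\beta$-sequence $\sinEtiq{A_0} \to_\beta \sinEtiq{A_1} \to_\beta \cdots$ starting from $M = \sinEtiq{A_0}$.

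\textbf{Step 2: the erased sequence is a superdevelopment.} The remaining point is that this erased sequence is not merely \emph{some} $\beta$-reduction from $M$ but a \emph{superdevelopment} from $M$ — otherwise finiteness of superdevelopments says nothing about it. This is where the labeling does its work: an initially labeled term $A$ (all abstractions carrying distinct labels) is, up to the technical difference noted in the Remark about binding labels in applications, an initial labeling in the sense of the standard superdevelopment calculus, and the labeled reduction relation only ever contracts redexes that are residuals-or-upward-creations in exactly the sense captured by those labels. So I would invoke (or, if the surrounding text provides it, cite) the correspondence between labeled reduction and superdevelopments: a $\wred$-sequence from an initially labeled $A$ projects, via $\sinEtiq{\cdot}$, to a superdevelopment from $\sinEtiq{A}$. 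Combined with Step 1, an infinite weak superdevelopment from $M$ yields an infinite superdevelopment from $M$, contradicting the cited finiteness theorem; hence every weak superdevelopment is finite.

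\textbf{Main obstacle.} The delicate point is Step 2 — justifying that the labeled weak calculus \weakLCL\ really is a sub-system of (some presentation of) the labeled calculus for superdevelopments, \emph{despite} the deliberate design difference flagged in the Remark (here labels are bound in applications rather than merely required to be ``well-labeled''). One has to check that this extra binding discipline only \emph{restricts} which reductions are possible — the side condition $\appma{(\lamma{x}{A})}{B}\disj\bPath{C}\oplus S$ and the $\sustLabel{a}$ operation can only forbid or relabel steps, never enable a step that would not be legal as a superdevelopment step — so that the class of weak superdevelopments embeds into the class of superdevelopments. Once that embedding is granted, finiteness is immediate. If the embedding were awkward to make precise at this point in the paper, the fallback is to give a direct termination argument (e.g. a decreasing measure on labeled terms, counting occurrences of non-$\star$ labels weighted by term size), but the intended and far shorter route is the reduction to the known result.
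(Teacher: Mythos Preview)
Your approach is correct and matches the paper's: the paper's entire argument is the single remark that weak superdevelopments are also superdevelopments (citing \cite{vanRaamsdonk:1993,TesisVanRaamsdonk:1996} for their finiteness), and your Steps 1--2 merely spell out what that inclusion means. Your discussion of the ``main obstacle'' is more careful than the paper itself, which leaves the embedding implicit.
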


A further property is that $S$ may be weakened with further variables
not occurring free in $A$ without affecting reduction. This extends to
many-step reduction.

\begin{lemma}
\label{lem:weakeningOfBindingPathPreservesReduction}
$A \wredDisj{S} B$ and $x \not\in \freeVars{A}$ implies
$A \wredDisj{S \oplus x} B$.
\end{lemma}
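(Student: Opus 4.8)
The plan is to peel off the definition of a single $\wredDisj{S}$-step and observe that replacing $S$ by $S\oplus x$ changes nothing except the disjointness side-condition of the reduction rule, which is then easy to re-establish. Concretely, the hypothesis $A \wredDisj{S} B$ unfolds to: there are a context $C$ and a redex $\Delta = \appma{(\lamma{y}{A_0})}{B_0}$ with $A = C[\Delta]$, $B = C[A_0\sustLabel{a}\sust{y}{B_0}]$, and $\Delta \disj \bPath{C}\oplus S$. I would re-derive the reduction step for $S\oplus x$ using the very same $C$ and $\Delta$ --- the contractum is literally the same term --- so the only new proof obligation is the side-condition $\Delta \disj \bPath{C}\oplus S\oplus x$. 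Since $\Delta\disj\bPath{C}\oplus S$ is already in hand and $\disj$ only looks at the underlying set of variables, this reduces to showing $x\notin\freeVars{\Delta}$.

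The key step is the inclusion $\freeVars{\Delta}\setminus\bPath{C}\subseteq\freeVars{C[\Delta]}$: plugging a term into a context can only delete from its free-variable set the variables that get captured along the binding path above the hole. This is a routine induction on the structure of $C$, the one interesting clause being the abstraction case, where $\bPath{\lamma{z}{C'}}=\bPath{C'}\conc z$. Now $\Delta\disj\bPath{C}$ is part of the hypothesis, i.e.\ $\freeVars{\Delta}\cap\bPath{C}=\emptyset$, so the subtraction is vacuous and $\freeVars{\Delta}\subseteq\freeVars{C[\Delta]}=\freeVars{A}$. Together with the assumption $x\notin\freeVars{A}$ this gives $x\notin\freeVars{\Delta}$, which discharges the side-condition and hence establishes $A\wredDisj{S\oplus x}B$.

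I do not expect a genuine obstacle here; the only point that needs a little care is the bookkeeping behind $\freeVars{\Delta}\setminus\bPath{C}\subseteq\freeVars{C[\Delta]}$, which relies on the variable convention (bound variables renamed apart and kept distinct from the free ones). Finally, for the many-step statement alluded to just before the lemma, I would observe that the result iterates: if $A\wreddDisj{S}B$ and $x\notin\freeVars{A}$ then, since (as noted earlier) reduction does not create free variables, $x$ does not occur free in any intermediate term, so the single-step argument applies at each step and yields $A\wreddDisj{S\oplus x}B$.
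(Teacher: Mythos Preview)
Your argument is correct. The paper states this lemma without proof, treating it as a routine observation, so there is no proof in the paper to compare against; your unfolding of the reduction rule, the inclusion $\freeVars{\Delta}\setminus\bPath{C}\subseteq\freeVars{C[\Delta]}$, and the use of $\Delta\disj\bPath{C}$ to drop the subtraction together constitute exactly the expected direct verification, and your remark on iterating to the many-step case via preservation of free variables matches the paper's one-line comment that the property ``extends to many-step reduction.''
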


Regarding reduction and the context in which the redex occurs:

\begin{lemma}
\label{lemma:auxSubstitutionPreservesReduction}

The following are equivalent:
\begin{enumerate}
\item Suppose $A \wred B$ by contracting a redex
  $\Delta$. $\Delta \disj S$ iff $C[A] \wred C[B]$ for
  every context $C$ s.t. $\bPath{C} = S$.

\item $C_1[C_2[\Delta]] \wredDisj{S} C_1[C_2[\Delta']]$ iff $
  C_2[\Delta] \wredDisj{S \oplus \bPath{C_1}} C_2[\Delta']$, where
  $\Delta$ is the contracted redex.

\end{enumerate}
\end{lemma}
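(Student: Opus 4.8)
The statement asserts the equivalence of two ``contextual robustness'' properties of weak reduction. My plan is to prove (1) directly from the reduction rule, then obtain (2) as a consequence of (1) by choosing the context appropriately, and finally observe that (1) also follows from (2) so that both are in fact equivalent (indeed, both are restatements of the side-condition built into the reduction rule). The key observation throughout is the definitional fact $\bPath{C_1[C_2]} = \bPath{C_2} \oplus \bPath{C_1}$: wrapping an extra layer $C_1$ around a context appends the binders of $C_1$ \emph{after} those of $C_2$ in the binding path, which matches the role of the sequence $S$ in the rule.

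For item (1): suppose $A \wred B$ by contracting a redex $\Delta = \appma{(\lamma{x}{A_0})}{B_0}$, so $A = D[\Delta]$ and $B = D[\Delta']$ with $\Delta \disj \bPath{D}$ (this disjointness is forced since $A \wred B$ means $A \wredDisj{\emptyset} B$, i.e.\ $\Delta \disj \bPath{D} \oplus \emptyset$). Given any context $C$ with $\bPath{C} = S$, the term $C[A] = C[D[\Delta]] = (C[D])[\Delta]$, and $\bPath{C[D]} = \bPath{D} \oplus S$. Hence $C[A] \wred C[B]$ holds iff $\Delta \disj \bPath{D} \oplus S$; since $\Delta \disj \bPath{D}$ already holds, this is equivalent to $\Delta \disj S$. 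That gives the ``iff'' of (1). (One should also note $\Delta$ survives intact as a redex inside $C[D]$ — it does, since composing contexts does not alter the subterm $\Delta$ — and that the contractum is the same, which is immediate.)

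For item (2): apply (1) to the reduction $C_2[\Delta] \wred C_2[\Delta']$, which holds by contracting $\Delta$ precisely when $\Delta \disj \bPath{C_2}$. Taking $C := C_1$ in (1), and using $\bPath{C_1[C_2]} = \bPath{C_2} \oplus \bPath{C_1}$, we get that $C_1[C_2[\Delta]] \wred C_1[C_2[\Delta']]$ iff $\Delta \disj \bPath{C_1}$, under the hypothesis $\Delta \disj \bPath{C_2}$. To recover the stated form with the auxiliary set $S$, I would rerun the argument relative to a top-level binding context of path $S$ (or invoke Lemma~\ref{lem:weakeningOfBindingPathPreservesReduction} together with the ``$T \subseteq S$'' monotonicity remark preceding the lemma): the relation $\wredDisj{S}$ is exactly $\wred$ performed ``as if'' the redex additionally sat under binders $S$, so $C_2[\Delta] \wredDisj{S \oplus \bPath{C_1}} C_2[\Delta']$ iff $\Delta \disj \bPath{C_2} \oplus (S \oplus \bPath{C_1}) = (\bPath{C_2} \oplus \bPath{C_1}) \oplus S = \bPath{C_1[C_2]} \oplus S$, which by the rule is iff $C_1[C_2[\Delta]] \wredDisj{S} C_1[C_2[\Delta']]$. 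The converse direction (2)$\Rightarrow$(1) is the degenerate case $C_2 = \Box$, $C_1 = C$.

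The main obstacle I anticipate is purely bookkeeping: making the associativity/order conventions for $\oplus$ and for binding-path composition line up, and checking that substituting one context into another neither captures nor frees variables of the redex $\Delta$ in a way that would change $\freeVars{\Delta}$ (the convention that bound variables of distinct binders are renamed apart handles this, but it is the point that actually needs the convention). There is no deep content — both clauses are essentially a syntactic repackaging of the side condition $\appma{(\lamma{x}{A})}{B} \disj \bPath{C} \oplus S$ in the reduction rule — so the proof is a short unfolding once the notation is pinned down.
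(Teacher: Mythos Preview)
Your proposal is correct. The paper does not include a proof of this lemma (it is stated without proof), so there is nothing to compare against directly; but your argument is exactly the intended one: both items are immediate unfoldings of the side condition $\Delta \disj \bPath{C}\oplus S$ in the reduction rule, together with the compositionality fact $\bPath{C_1[C_2]} = \bPath{C_2}\oplus\bPath{C_1}$ and the observation that $M\disj S$ depends only on the underlying set of $S$ (so the order in which $S$ and $\bPath{C_1}$ are concatenated is irrelevant). Your caveat about the commutation of $S$ and $\bPath{C_1}$ being only up to underlying sets is precisely the right thing to flag; once that is made explicit the bookkeeping is complete.
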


Finally, we prove that substitution preserves reduction
(Lem.~\ref{lemaReduccionDeSustituciones}(2)). The proof is by
induction on $A$, resorting to
Lem.~\ref{lemma:auxSubstitutionPreservesReduction}(2) and
Lem.~\ref{lemaReduccionDeSustituciones}(1).

\begin{lemma}
\label{lemaReduccionDeSustituciones}
\begin{enumerate}
\item If $a \not\in \fl{B}$,
$A\sustLabel{a}\sust{x}{B}= A\sust{x}{B}\sustLabel{a}$.

\item Suppose $A \wredDisj{S} A'$ and $B\disj S$. Then 
$A\sust{x}{B} \wredDisj{S} A'\sust{x}{B}$.
\end{enumerate}
\end{lemma}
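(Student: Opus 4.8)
The plan is to prove the two parts in order, since the argument for part~(2) invokes part~(1) together with Lem.~\ref{lemma:auxSubstitutionPreservesReduction}(2). For part~(1) I would do a straightforward structural induction on the labeled term $A$. The side condition $a \notin \fl{B}$ is needed, and actually used, only in the base case $A = x$: there the left-hand side is $x\sustLabel{a}\sust{x}{B} = B$ while the right-hand side is $x\sust{x}{B}\sustLabel{a} = B\sustLabel{a}$, and these coincide precisely because $a \notin \fl{B}$ forces $B\sustLabel{a} = B$. The case $A = y$ with $y \ne x$ is immediate, and in the abstraction and application cases both operations commute with the term constructor, so the claim follows from the induction hypothesis; the only points to watch are that $\sust{x}{B}$ must not capture labels and that $\sustLabel{a}$ leaves untouched the occurrences of $a$ that are bound by an enclosing application labeled $a$, neither of which interferes with the recursive step.

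For part~(2) I would argue by induction on $A$, tracing the path to the contracted redex $\Delta$. If $A$ is a variable there is nothing to prove. If $A = \lamm{b}{y}{A_0}$, then by Lem.~\ref{lemma:auxSubstitutionPreservesReduction}(2) the hypothesis $A \wredDisj{S} A'$ is equivalent to $A_0 \wredDisj{S \oplus y} A_0'$ with $A' = \lamm{b}{y}{A_0'}$; since the bound variable $y$ is (by the variable convention) not free in $B$, we get $B \disj (S \oplus y)$, so the induction hypothesis gives $A_0\sust{x}{B} \wredDisj{S \oplus y} A_0'\sust{x}{B}$, and Lem.~\ref{lemma:auxSubstitutionPreservesReduction}(2) lifts this to $\lamm{b}{y}{A_0\sust{x}{B}} \wredDisj{S} \lamm{b}{y}{A_0'\sust{x}{B}}$, i.e.\ $A\sust{x}{B} \wredDisj{S} A'\sust{x}{B}$. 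If $A = \appm{b}{A_1}{A_2}$ and $\Delta$ lies inside $A_1$ or inside $A_2$, the reasoning is identical, except that the binding path of the surrounding one-step context is now empty, so the set $S$ is unchanged when Lem.~\ref{lemma:auxSubstitutionPreservesReduction}(2) and the induction hypothesis are applied.

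The case that needs real care is the one in which $A$ is itself the contracted redex, say $A = \appm{a}{(\lamm{a}{z}{A_1})}{A_2}$ with $A \disj S$ and $A' = A_1\sustLabel{a}\sust{z}{A_2}$. Assuming, as we may by renaming, that $z$ and the bound label $a$ are fresh for $B$ (so $z \notin \freeVars{B} \cup \{x\}$ and $a \notin \fl{B}$), the term $A\sust{x}{B}$ is again a redex, namely $\appm{a}{(\lamm{a}{z}{A_1\sust{x}{B}})}{A_2\sust{x}{B}}$, and it is away from $S$ because $\freeVars{A\sust{x}{B}} \subseteq \freeVars{A} \cup \freeVars{B}$ while $A \disj S$ and $B \disj S$. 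Contracting it yields $A\sust{x}{B} \wredDisj{S} (A_1\sust{x}{B})\sustLabel{a}\sust{z}{A_2\sust{x}{B}}$, so it remains to check
\[
(A_1\sust{x}{B})\sustLabel{a}\sust{z}{A_2\sust{x}{B}} = \bigl(A_1\sustLabel{a}\sust{z}{A_2}\bigr)\sust{x}{B}.
\]
Here part~(1), with $a \notin \fl{B}$, rewrites $(A_1\sust{x}{B})\sustLabel{a}$ as $A_1\sustLabel{a}\sust{x}{B}$, and then the familiar substitution lemma $M\sust{z}{A_2}\sust{x}{B} = M\sust{x}{B}\sust{z}{A_2\sust{x}{B}}$ (valid since $z \ne x$ and $z \notin \freeVars{B}$), applied with $M = A_1\sustLabel{a}$, turns the left-hand side into $\bigl(A_1\sustLabel{a}\sust{z}{A_2}\bigr)\sust{x}{B} = A'\sust{x}{B}$. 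I expect this last bookkeeping step — reconciling the ``substitute the argument, then erase the redex label'' order produced by contracting the image redex with the ``erase, substitute the argument, then substitute $B$'' order coming from $A'\sust{x}{B}$ — to be the main obstacle; the remaining cases are routine once Lem.~\ref{lemma:auxSubstitutionPreservesReduction}(2) is in hand.
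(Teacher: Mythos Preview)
Your proposal is correct and follows essentially the same approach as the paper: both parts are proved by induction on $A$, using the variable convention for bound variables, Lem.~\ref{lemma:auxSubstitutionPreservesReduction}(2) for the compatibility cases, and the substitution lemma $A_1\sust{x}{B}\sust{z}{A_2\sust{x}{B}} = A_1\sust{z}{A_2}\sust{x}{B}$ together with part~(1) for the head-reduction case. Your write-up simply spells out what the paper leaves as a sketch.
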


\begin{proof}
Both proofs procceed by induction on $A$.
In the cases where a variable $y$ is bound in $A$, we assume
the convention that $y$ does not occur in $B$. Also, when a
head reduction takes place, we resort to the fact that
$A_1\sust{x}{B}\sust{y}{A_2\sust{x}{B}} = A_1\sust{y}{A_2}\sust{x}{B}$. 

\end{proof}

\subsection{Weak Superdevelopments via Simultaneous Reduction}
\label{subsec:developmentsViaParallelReduction}

An alternative presentation of weak superdevelopments is by means of
simultaneous reduction.
    It has numerous benefits over labeled
    reduction. One is that it satisfies the diamond property (and can be
    used for proving confluence of labeled reduction and the
    \weakLC\ (Prop.~\ref{prop:confluenceOfWeakLC})). Another is that we
    can avoid reasoning over reduction to a normal form: one simultaneous
    reduction step suffices for superdeveloping a term.
A naive attempt at formalizing this notion in big-step style
fails. Let us write $M\superdevelops{S}N$ for such a judgement, where
$S$ is a sequence of variables which denotes the binding context
(i.e. the variables that are bound in the context) in which this
superdevelopment takes place.  This judgement would include the
inference scheme:
\begin{equation}
\regla{M\superdevelops{S} \lam{x}{M'}
          \quad N\superdevelops{S} N'
          \quad (\lam{x}{M'})\,N'\disj S} 
         {M\,N \superdevelops{S} M'\sust{x}{N'}} 
         {}  
\label{eqn:superDev}
\end{equation}
However, it turns out that we need to consider an exception to the
condition $(\lam{x}{M'})\,N'\disj S$, namely when an
abstraction contributes to a head redex in a superdevelopment. Indeed,
in this case the abstracted variable is to be substituted and hence
redexes which contain occurrences of this variable \textit{can} be
contracted.  
As an example, consider the following superdevelopment in
the \weakLC:
\begin{center}
$I\,(\lam{x}{I\,x})\,y\red{} (\lam{x}{I\,x})\,y\red{} I\,y\red{}y$
\end{center}
To deduce the judgement $I\,(\lam{x}{I\,x})\,y\superdevelops{S} y$
using (\ref{eqn:superDev}) we require
$I\,(\lam{x}{I\,x})\superdevelops{S} \lam{x}{x}$.  For this to hold we
must allow contraction of $I\,x$. As a consequence, we study an
extended judgement $M\superdevelops{S,k} N$ in which the integer $k
\geq 0$ indicates how many abstractions of $M$ contribute to a head
redex in a later stage of the derivation. Reduction under these
abstractions is allowed. Note that, in contrast to
Sec.~\ref{subsec:developmentsViaLabeledReduction} where $S$ in
$A\wredDisj{S} B$ includes all the variables in $A$ bound above the
redex that is reduced, in the judgement
$M\superdevelops{S,k} N$ the sequence $S$ contains only the variables
in $M$ bound above the redexes of the superdevelopment under which
reduction is not allowed.


\begin{definition}
There is a {\em superstep from $M$ to $N$  under $S,k$} iff
$M\superdevelops{S,k}N$, where this judgement is defined as follows:
\begin{center}
$\begin{array}{c}
\reglaP{}{x \superdevelops{S,0} x}{Var}
\hspace{.5cm}
\reglaP{M \superdevelops{x \conc S,0} M'}
       {\lam{x}{M} \superdevelops{S,0} \lam{x}{M'}}{Abs1}
\hspace{.5cm}
\reglaP{M \superdevelops{S,k} M'}
       {\lam{x}{M} \superdevelops{S,k+1} \lam{x}{M'}}{Abs2}
\hspace{.5cm}
\reglaP{M \superdevelops{S,0} M'
        \quad 
        N \superdevelops{S,0} N'}
       {M N \superdevelops{S,0} M' N'}{App1}
\\
\\
\reglaP{M \superdevelops{S,n+1} \lam{x}{M'}
        \quad 
        N \superdevelops{S,m} N'
        \quad \lamp{x}{M'}N' \disj S
        \quad m > 0 \imp M' = \lam{\overline{x}^n}{x} }
       {M N \superdevelops{S,n+m} M'\sust{x}{N'}}{App2}
\end{array}$
\end{center}

There is a {\em complete} superstep from $M$ to $N$ under $S,k$ if in
the derivation of the judgement $M \superdevelops{S,k} N$ the scheme
\laRegla{App1} is used only if \laRegla{App2} is not applicable.
 
\end{definition}

Note that this definition establishes an inside-out strategy for
computing a complete weak superdevelopments.



\section{Equivalence of Presentations}
\label{sec:equivalence}

In this section we prove the following result (the first item in
Sec.~\ref{subsec:parallelToLabeled} and the second in
Sec.~\ref{subsec:labeledToParallel}), where we write $M\et$ for any
labeling of $M$.

\begin{theorem}
\label{thm:equivalence}
\begin{enumerate}

\item If $M \superdevelops{S,0} N$, then there exist $M\et,
  N\et$ s.t.  $M\et \wreddDisj{S} N\et$.

\item If $M\et \wreddDisj{S} N\et$  and
  $N\et$ is in normal form, then $M \superdevelops{S,0} N$.

\end{enumerate}
\end{theorem}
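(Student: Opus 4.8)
The plan is to treat the two implications by separate inductions, each time strengthening the statement so that it also covers the auxiliary judgements $M \superdevelops{S,k} N$ with $k > 0$. A preliminary observation, proved by a routine induction on the derivation, is that $M \superdevelops{S,k} N$ forces $N$ to have at least $k$ leading abstractions, say $N = \lam{\overline{x}^k}{N_0}$; the only case needing care is \laRegla{App2}, where the side condition $m>0 \imp M' = \lam{\overline{x}^n}{x}$ together with the induction hypothesis on the left premise supplies these abstractions.

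For item~1 I would prove, by induction on the derivation of $M \superdevelops{S,k} N$, the following generalization: for every sequence $\overline{z}^k$ of fresh distinct variables disjoint from $\freeVars{M}$ and from $S$, there are a labeling $M\et$ of $M$, labels $a_0,\dots,a_{k-1}$ for the $k$ spine applications, and a labeled term $B$ with $\sinEtiq{B} = N_0\sust{\overline{x}^k}{\overline{z}^k}$, such that $M\et\,z_0\cdots z_{k-1} \wreddDisj{S} B$ (the $i$-th spine application carrying label $a_i$); instantiating $k=0$ yields item~1 itself. The cases \laRegla{Var}, \laRegla{Abs1} and \laRegla{App1} occur at index $0$ and amount to reassembling the reductions provided by the induction hypothesis inside a context, which is legitimized by Lemma~\ref{lemma:auxSubstitutionPreservesReduction}(2). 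For \laRegla{Abs2} one takes the reduction obtained for the premise (at index $k$, with spine arguments $z_1,\dots,z_k$), prepends the head contraction of the redex on $z_0$, and propagates the residual reduction through the label substitution and the term substitution $\sust{x}{z_0}$ by Lemma~\ref{lemaReduccionDeSustituciones}(2), whose disjointness hypothesis holds because $z_0$ is fresh. The case \laRegla{App2} is the crux of this direction: the induction hypothesis applied to the left premise $M_1 \superdevelops{S,n+1} \lam{x}{M'}$ gives a reduction of a labeling of $M_1$ applied to $n+1$ arguments, into which one feeds a labeling of $N$ (from the right premise) as the first argument so that the leading head contraction performs the substitution $\sust{x}{N\et}$; one then concatenates the reduction obtained for the right premise with the delayed reductions beneath the abstractions, again via Lemma~\ref{lemaReduccionDeSustituciones}(2) and Lemma~\ref{lemma:auxSubstitutionPreservesReduction}. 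The side condition $\lamp{x}{M'}N' \disj S$ is exactly what makes those head contractions legal for $\wredDisj{S}$, the constraint $m>0\imp M'=\lam{\overline{x}^n}{x}$ is what makes the arities on the two sides of the application fit together, and Lemma~\ref{lem:weakeningOfBindingPathPreservesReduction} is used wherever binding paths must be padded.

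For item~2 I would establish the dual generalization: for every labeled term $A$, every $k$, every choice of labels $a_0,\dots,a_{k-1}$, and every fresh $\overline{z}^k$ disjoint from $S$, if $A \wreddDisj{S} A'$ with $A'$ a $k$-fold abstraction whose $k$ outermost labels are $a_0,\dots,a_{k-1}$, and if $A\,z_0\cdots z_{k-1} \wreddDisj{S} C$ (the $i$-th spine application carrying label $a_i$) with $C$ a normal form, then $\sinEtiq{A}\superdevelops{S,k}\lam{\overline{z}^k}{\sinEtiq{C}}$; taking $k=0$ recovers item~2. The argument goes by a nested induction, on the number of steps in the reduction and then on the structure of $A$. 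Using confluence of labeled reduction (Prop.~\ref{prop:confluenceOfWeakLC}) and finiteness of weak superdevelopments, one first reorganizes the reduction so that the contractions internal to the head subterm of $A$ precede the (at least $k$) head contractions, which are themselves performed in sequence, and then recurses on the residual reduction. The cases mirror the superstep rules: if the head of $A$ never becomes a contractible redex with a matching label — or becomes one that mentions a variable of $S$ and is therefore stuck — one applies \laRegla{App1}; otherwise one applies \laRegla{App2}, reading $n$ off as the number of consecutive head contractions owed by the left subterm and $m$ off from the argument, and checking $m>0\imp M'=\lam{\overline{x}^n}{x}$ by tracking which of the fresh variables is eventually returned.

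I expect the main obstacle to be item~2: pinning down the generalized statement correctly — in particular the bookkeeping of how many head contractions the left subterm of an application contributes, and the matching of the successive spine labels — and justifying, via confluence and a postponement argument, that every labeled reduction to normal form can be brought into the inside-out shape prescribed by the superstep rules. Item~1 should, by contrast, be mostly a matter of threading Lemmas~\ref{lem:weakeningOfBindingPathPreservesReduction}, \ref{lemma:auxSubstitutionPreservesReduction} and \ref{lemaReduccionDeSustituciones} through the induction, the one subtlety being the coordinated choice of labels so that the intermediate terms along the spine really are redexes.
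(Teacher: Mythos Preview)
Your plan is workable in outline but takes a genuinely different route from the paper, and the two directions diverge in different ways.

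For item~1 the paper does \emph{not} apply $M\et$ to fresh spine arguments. Instead it introduces two intermediate relations. First, a labeled superstep $A \superdevet{S,k} B$ (Def.~\ref{def:simultaneousLabeledWeakSuperdevs}), which is just the rules of $\superdevelops{S,k}$ replayed on labeled terms; Lem.~\ref{lemma:propertiesOfLabeledParallelReduction}(3) lifts any unlabeled superstep to a labeled one. Second, \emph{chain reduction} $A \redEnc{S,k} B$ (Def.~\ref{def:chainReduction}), defined by $A\redEnc{S,0}B$ iff $A\wreddDisj{S}B$, and $A\redEnc{S,k+1}\lamma{x}{A_2}$ iff $A\wreddDisj{S}\lamma{x}{A_1}$ and $A_1\redEnc{S,k}A_2$. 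One then shows $A\superdevet{S,k}B$ implies $A\redEnc{S,k}B$ (Prop.~\ref{prop:headImpliesChainReduction}); at $k=0$ this is exactly item~1. Your fresh-variable trick is morally the same idea --- applying to a fresh $z$ with the right label simulates ``peeling off'' one abstraction --- but chain reduction packages it without having to pick spine labels that match abstractions appearing only after reduction, and without the awkward step of propagating the right-premise reduction $M_2\et\wreddDisj{S}N'\et$ into every residual occurrence inside $B_1\sust{w_0}{M_2\et}$. Both approaches ultimately lean on Lem.~\ref{lemaReduccionDeSustituciones}(2) and the invariant $\freeVars{A}\cap S=\freeVars{B}\cap S$ along $\wreddDisj{S}$; the paper's formulation just hides the label bookkeeping.

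For item~2 the paper avoids any reorganization or postponement argument. It defines, by structural recursion on $A$, the \emph{full weak superdevelopment} $A\est{S,k}$, checks $A\superdevet{S,k}A\est{S,k}$ (Lem.~\ref{lemma:completeSuperDevAndReduction}(1)), and proves the invariance $A\superdevet{S,k}B \Rightarrow A\est{S,k}=B\est{S,k}$ (Lem.~\ref{lemma:SuperdevsCoincidenEnSuperdevCompleto}). From these, $M\et\wreddDisj{S}N\et$ with $N\et$ normal forces $N\et=M\et\est{S,0}$, hence $M\et\superdevet{S,0}N\et$, hence $M\superdevelops{S,0}N$ by Lem.~\ref{lemma:propertiesOfLabeledParallelReduction}(2). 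Your proposed induction on length plus reorganization via confluence is not wrong in principle, but the ``reorganize so that inner contractions precede head contractions'' step is exactly where the difficulty hides: making that precise for all shapes of $A$ and all $k$ essentially reconstructs the function $A\est{S,k}$ and its invariance lemma. (Note also that the confluence you invoke is itself obtained in the paper from this same $\est{S,k}$ machinery, so you would be relying on the very development you are trying to bypass.) The paper's functional characterization of the normal form is the cleaner way to discharge item~2.
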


In the second item, note that the binding nature of labels in
applications is required for the statement to hold. For example, this
is the reduction sequence one would obtain were labels in applications not
considered binding:
\begin{center}
$\appm{b}{\lamm{b}{x}{\appm{a}{x}{y}}}{\lamma{z}{z}}
\wred 
\appm{a}{(\lamma{z}{z})}{y}
\wred
y$ 
\end{center}
Notice that it is not the case that
$\appm{b}{\lamm{b}{x}{\appm{a}{x}{y}}}{\lamma{z}{z}}
\superdevelops{S,0} y$ as may be seen by trying to derive this
judgement.  
The requirement that $N\et$ be in normal form is justified
by the following example, where $A$ is any redex s.t. $A\wred A'$:
\begin{center}
$\appm{a}{(\lamm{a}{x}{\appm{b}{x}{x}})}{A}\wred \appm{b}{A}{A} \wred \appm{b}{A'}{A}$.
\end{center}
It is clear that labeled reduction still has some work to do: a
labeled redex remains (i.e. it is an incomplete weak
superdevelopment). In fact, this is an example of an incomplete
development.  For this reason, the judgement
$\appma{(\lamma{x}{\appm{b}{x}{x}})}{A} \superdevelops{S,0} A\,A'$ is
not derivable.


\subsection{Supersteps over Labeled Terms}


As mentioned, we have to relate labeled reduction to {\em normal form}
with supersteps. In order to do so, we introduce an intermediate
notion of supersteps over {\em labeled} terms $A \superdevet{S,k}
B$. The reason is that when passing from $\wred$ steps to
$\superdevelops{S,k}$ steps we lose the labels and hence our handle
over this normal form (which is the complete weak superdevelopment of
the {\em labeled} redexes).  In summary, in
Sec.~\ref{subsec:parallelToLabeled}, in relating $\superdevelops{S,k}$
with $\wredd$, we shall first go from $\superdevelops{S,k}$ to
$\superdevet{S,k}$ (defined below) and then to $\wredd$. Conversely,
in Sec.~\ref{subsec:labeledToParallel}, we shall first go from
$\wredd$ to $\superdevet{S,k}$ and then to $\superdevelops{S,k}$.

\begin{definition}[Supersteps over labeled terms]
\label{def:simultaneousLabeledWeakSuperdevs}
We say there is a {\em superstep from $A$
    to $B$ under $S,k$} iff $A \superdevet{S,k} B$, where this
  judgement is defined as follows: 
\begin{center}
$\begin{array}{c}
    \reglaP{}{x \superdevet{S,0} x}{LVar}
    \hspace{0.5cm} 
    \reglaP{A \superdevet{x \conc S,0} A'}
           {\lamma{x}{A}
      \superdevet{S,0} \lamma{x}{A'}}
           {LAbs1}
    \\
    \\
    \reglaP{A \superdevet{S,k} A'}
           {\lamma{x}{A}
      \superdevet{S,k+1} \lamma{x}{A'}}
           {LAbs2}
    \hspace{0.5cm} 
    \reglaP{A \superdevet{S,0} A'
            \quad 
            B \superdevet{S,0} B'}
           {\appma{A}{B} \superdevet{S,0} \appma{A'}{B'}}
           {LApp1}
    \\
    \\
\reglaP{A \superdevet{S,n+1} \lamma{x}{A'}
       \quad
       B \superdevet{S,m} B'
       \quad \appma{(\lamma{x}{A'})}{B'} \disj S
       \quad m > 0 \imp A' = \lamm{\overline{a}^n}{\overline{x}^n}{x} }
       {\appma{A}{B} \superdevet{S,n+m} A'\sustLabel{a}\sust{x}{B'}}
       {LApp2}
\end{array}
$
\end{center}
\end{definition}

Note that $\superdevet{S,k}$ is reflexive and $A \superdevet{S,k} B$
implies $\freeVars{A} \supseteq \freeVars{B}$.  Additional basic
properties of reduction are stated below.
    The first one states
    that a weak superdevelopment under $S,k$ leaves $k$ abstractions at
    the root of the term. The remaining ones indicate how labeled and
    unlabeled simultaneous reduction relate.

\begin{lemma}
\label{lemma:propertiesOfLabeledParallelReduction}
\begin{enumerate}

\item If $A \superdevet{S,k} B$, then
    there exist variables $x_i$ and labels $a_i$, with $1 \leq i \leq k$,
    and $B' \in \termsEtiq$ s.t.
    $B = \lamm{a_1 \hdots a_k}{x_1 \hdots x_k}{B'}$.

\item If $A \superdevet{S,k} B$, then $\sinEtiq{A} \superdevelops{S,k}
  \sinEtiq{B}$.

\item If $M \superdevelops{S,k} N$, then there exist $M\et, N\et \in
  \termsEtiq$ s.t.  $M\et \superdevet{S,k}
  N\et$.
\end{enumerate}
\end{lemma}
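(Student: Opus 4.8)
All three items are proved by induction on the derivation of the relevant judgement. For item (1), I would induct on the derivation of $A \superdevet{S,k} B$. The base cases \laRegla{LVar} and \laRegla{LApp1} force $k = 0$, so $B$ has zero leading abstractions and there is nothing to prove. For \laRegla{LAbs1} again $k = 0$ and $B = \lamma{x}{A'}$; wait — this already has one abstraction, so in fact I should be careful: \laRegla{LAbs1} produces $k=0$ but $B = \lamma{x}{A'}$, which has a leading abstraction. The statement only claims the existence of such a decomposition with exactly $k$ named leading abstractions, so for $k=0$ the claim $B = B'$ is trivially met by taking $B' = B$ regardless of $B$'s shape. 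The interesting case is \laRegla{LAbs2}: here $B = \lamma{x}{B_0}$ where by IH $B_0 = \lamm{\overline{a}^{k}}{\overline{x}^{k}}{B_0'}$ (since the premise is a step under $k$, wait — the premise has index $k$ and the conclusion $k+1$), so $B = \lamm{a\,\overline{a}^{k}}{x\,\overline{x}^{k}}{B_0'}$, giving $k+1$ leading abstractions as required. Finally for \laRegla{LApp2}: the conclusion is a step under $n+m$ and the result is $A'\sustLabel{a}\sust{x}{B'}$ where $A' = \lamm{\overline{a}^{n}}{\overline{x}^{n}}{x}$ when $m > 0$, so after substituting $B'$ for $x$ and using the IH on $B \superdevet{S,m} B'$ (which gives $B' = \lamm{\overline{b}^{m}}{\overline{y}^{m}}{B''}$), the composite term has $n + m$ leading abstractions; when $m = 0$, $A'$ is an arbitrary $\lamma{x}{A'}$ but then $n + 1$ leading abstractions come from IH applied to the premise $A \superdevet{S,n+1} \lamma{x}{A'}$ together with the substitution, and $m=0$ so the count is $n+1 = n+m+1$... this needs the small observation that $\lamma{x}{A'}$ with $A' = \lamm{\overline{a}^{?}}{...}$ — I would unwind this carefully but it is routine bookkeeping on the leading-abstraction count.

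For item (2), I would induct on the derivation of $A \superdevet{S,k} B$ and check that each labeled rule maps to the corresponding unlabeled rule under $\sinEtiq{\cdot}$: \laRegla{LVar} $\mapsto$ \laRegla{Var}, \laRegla{LAbs1} $\mapsto$ \laRegla{Abs1}, \laRegla{LAbs2} $\mapsto$ \laRegla{Abs2}, \laRegla{LApp1} $\mapsto$ \laRegla{App1}, and \laRegla{LApp2} $\mapsto$ \laRegla{App2}. The only points requiring care are: (i) that $\sinEtiq{\cdot}$ commutes with substitution and with the label-erasing operation $\sustLabel{a}$, i.e. $\sinEtiq{A'\sustLabel{a}\sust{x}{B'}} = \sinEtiq{A'}\sust{x}{\sinEtiq{B'}}$, which follows since $\sinEtiq{\cdot}$ erases all labels anyway; (ii) that the side condition $\appma{(\lamma{x}{A'})}{B'} \disj S$ on \laRegla{LApp2} transfers to $\lamp{x}{\sinEtiq{A'}}\sinEtiq{B'} \disj S$, which holds because erasing labels does not change free variables; and (iii) that the condition "$m > 0 \imp A' = \lamm{\overline{a}^n}{\overline{x}^n}{x}$" erases to "$m > 0 \imp \sinEtiq{A'} = \lam{\overline{x}^n}{x}$", which is immediate.

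For item (3), I would induct on the derivation of $M \superdevelops{S,k} N$, building the labelings $M\et, N\et$ as I go, and maintaining the invariant that $M\et$ is \emph{initially labeled} (all abstractions carry distinct labels) — this is needed to be able to freely choose fresh labels when combining subderivations, and it is always achievable since $\conjEtiq$ is infinite. For \laRegla{Var} take $M\et = N\et = x$. For \laRegla{Abs1}/\laRegla{Abs2}, by IH pick labelings $M_0\et \superdevet{\cdot}{N_0\et}$ and attach a fresh label $a$ to the outer abstraction on both sides. For \laRegla{App1}, by IH get $M_1\et \superdevet{S,0} M_1'\et$ and $M_2\et \superdevet{S,0} M_2'\et$; rename labels so the two labelings share no labels, attach a fresh binding label $a$ to the application node on both sides, and apply \laRegla{LApp1}. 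The delicate case — and the one I expect to be the main obstacle — is \laRegla{App2}: here the IH gives $M_1\et \superdevet{S,n+1} \lamma{x}{M'\et}$ (with $M'\et$ a labeling of $M'$) and $M_2\et \superdevet{S,m} N'\et$. I need the target term $M'\et\sustLabel{a}\sust{x}{N'\et}$ to be a labeling of $M'\sust{x}{N'}$ and the side conditions of \laRegla{LApp2} to hold. The free-variable side condition transfers as before. The condition "$m > 0 \imp M'\et = \lamm{\overline{a}^n}{\overline{x}^n}{x}$" must be arranged: when $m > 0$, the unlabeled side condition says $M' = \lam{\overline{x}^n}{x}$, so $M'\et$ is forced to be of the form $\lamm{\overline{a}^n}{\overline{x}^n}{x}$ for \emph{some} labels $\overline{a}^n$ — and the IH lets me name those labels freely, so I just take the particular labeling dictated by the shape. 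The remaining care is label hygiene across the substitution $M'\et\sustLabel{a}\sust{x}{N'\et}$: since substitution "must not capture labels" (as noted in the paper), I rename the labelings of $M_1$ and $M_2$ apart first, and then $\sustLabel{a}$ followed by $\sust{x}{\cdot}$ is well-behaved and commutes with $\sinEtiq{\cdot}$. Checking that all these label-renaming operations preserve the previously-established judgements (which they do, since $\superdevet{S,k}$ is stable under injective label renaming) is the bulk of the work, but it is routine once the invariant "initially labeled / labels chosen fresh" is maintained.
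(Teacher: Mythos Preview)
Your approach is correct; the paper states this lemma without proof, treating all three items as routine inductions on the respective derivations, which is exactly what you do. The only muddled passage is the $m=0$ subcase of \laRegla{LApp2} in item~(1): there the conclusion carries index $n+m=n$, and the IH on the first premise gives $n+1$ leading abstractions for $\lamm{a}{x}{A'}$, hence $n$ for $A'$; since neither $\sustLabel{a}$ nor $\sust{x}{B'}$ disturbs those $n$ leading binders (by the variable convention $x\notin\{x_1,\ldots,x_n\}$), the count comes out to $n$ as required --- you flag this as bookkeeping to be unwound carefully, and it is.
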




\subsection{From Supersteps to Labeled Reduction}
\label{subsec:parallelToLabeled}


We address the proof of the first item of
Thm.~\ref{thm:equivalence}: If $M \superdevelops{S,0} N$,
  then there exist $M\et, N\et$ s.t.  $M\et \wreddDisj{S}
  N\et$.  Note that from $M \superdevelops{S,k} N$ and
Lem.~\ref{lemma:propertiesOfLabeledParallelReduction}(3), there exist
$M\et, N\et$ s.t.  $M\et \superdevet{S,k} N\et$. Thus we are left to
verify that 
\begin{equation}
M\et \superdevet{S,k} N\et \mbox{  implies }M\et
\wreddDisj{S} N\et
\label{eqn:parallelImpliesLabeled}
\end{equation}

In general, (\ref{eqn:parallelImpliesLabeled}) does not hold. The
problem is that $k > 0$ allows redexes with occurrences of bound
variables to be contracted. Eg. 
$\lamma{x}{\appm{b}{(\lamm{b}{y}{y})}{x}} \superdevet{\epsilon, 1}
\lamma{x}{x}$ but $\lamma{x}{\appm{b}{(\lamm{b}{y}{y})}{x}}
\not\wreddDisj{\epsilon} \lamma{x}{x}$. An attempt to prove
(\ref{eqn:parallelImpliesLabeled}) by induction on the derivation of
the judgement $M\et \superdevet{S,k} N\et$ reveals that we need to
consider a relaxed notion of labeled reduction in which contraction
of {\em some} redexes having free occurrences of bound variables is
admitted. The {\em only} such redexes which are allowed to be
contracted are those that contribute to the patterns of redexes that
are to be consumed in a weak superdevelopment. We call this {\em chain
  reduction}. Its definition arises from a fine analysis of the
generalization required of the hypothesis in order for the inductive
proof of (\ref{eqn:parallelImpliesLabeled}) to go through
(particularly when the term is an application).
\begin{definition}[Chain reduction]
\label{def:chainReduction}
The judgement $A \redEnc{S,k} B$ is defined by induction on $k$.
\begin{itemize}
\item $A \redEnc{S,0} B$ holds iff $A \wreddDisj{S} B$.

\item $A \redEnc{S,k+1} B$ holds iff there exist $A_1, A_2$ s.t. (1) $A \wreddDisj{S} \lamma{x}{A_1}$; (2) $A_1 \redEnc{S,k} A_2$; and (3) $B = \lamma{x}{A_2}$.
\end{itemize}
\end{definition}
Note that if $A \redEnc{S,k} B$, then
there exist variables $x_i$ and labels $a_i$, with $1 \leq i \leq k$, and $B' \in \termsEtiq$ s.t.
$B = \lamm{\overline{a}^k}{\overline{x}^k}{B'}$.

    The following
    congruence properties of chain reduction shall be required. The proof
    of those for application resort to
    Lem.~\ref{lem:weakeningOfBindingPathPreservesReduction},
    Lem.~\ref{lemaReduccionDeSustituciones}(2) and the fact that $A
    \wreddDisj{S} B$ implies $\freeVars{A} \setIntersection S =
    \freeVars{B} \setIntersection S$.

    \begin{lemma}[Abstraction]
    \label{lemaReduccionEnCadenaAbs}
    \begin{enumerate}
    \item If $B \redEnc{x \conc S,0} B'$, then
    $\lamma{x}{B} \redEnc{S,0} \lamma{x}{B'}$.

    \item If $B \redEnc{S,k} B'$, then $\lamma{x}{B} \redEnc{S,k+1} \lamma{x}{B'}$.
    \end{enumerate}
    \end{lemma}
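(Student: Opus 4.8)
The plan is to unfold Definition~\ref{def:chainReduction} in both cases. Essentially all of the work lies in item~1; item~2 is almost immediate.

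For item~2 I would argue directly from the defining clause of $\redEnc{S,k+1}$. Given the hypothesis $B \redEnc{S,k} B'$, I instantiate the existentials of that clause with $A_1 \igdef B$ and $A_2 \igdef B'$: condition~(1), namely $\lamma{x}{B} \wreddDisj{S} \lamma{x}{B}$, holds because $\wreddDisj{S}$ is reflexive (it is the reflexive--transitive closure of $\wredDisj{S}$); condition~(2) is exactly the hypothesis $B \redEnc{S,k} B'$; and condition~(3), $\lamma{x}{B'} = \lamma{x}{A_2}$, holds by the choice of $A_2$. Hence $\lamma{x}{B} \redEnc{S,k+1} \lamma{x}{B'}$.

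For item~1, recall that $\redEnc{S,0}$ is by definition $\wreddDisj{S}$, so the goal is to show that $B \wreddDisj{x \conc S} B'$ implies $\lamma{x}{B} \wreddDisj{S} \lamma{x}{B'}$. I would proceed by induction on the length of the reduction $B \wreddDisj{x \conc S} B'$. The base case, $B = B'$, follows from reflexivity of $\wreddDisj{S}$. For the inductive step, write the first step as $B = C_2[\Delta] \wredDisj{x \conc S} C_2[\Delta'] = B_1$, where $\Delta$ is the contracted redex and $C_2$ is the surrounding context in $B$. Put $C_1 \igdef \lamma{x}{\Box}$, so $\bPath{C_1} = x$; since the sequences $x \conc S$ and $S \oplus \bPath{C_1}$ have the same underlying set and $\wredDisj{-}$ is insensitive to the sequence structure of its index (by the remark that $A \wredDisj{S'} B$ and $T \subseteq S'$ imply $A \wredDisj{T} B$), the first step is also an instance of $C_2[\Delta] \wredDisj{S \oplus \bPath{C_1}} C_2[\Delta']$. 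Lemma~\ref{lemma:auxSubstitutionPreservesReduction}(2) then yields $C_1[C_2[\Delta]] \wredDisj{S} C_1[C_2[\Delta']]$, that is, $\lamma{x}{B} \wredDisj{S} \lamma{x}{B_1}$. The induction hypothesis applied to the tail $B_1 \wreddDisj{x \conc S} B'$ gives $\lamma{x}{B_1} \wreddDisj{S} \lamma{x}{B'}$, and concatenating the two reductions closes the step.

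I do not expect a genuine obstacle here. Beyond the routine induction, the only point that needs care in item~1 is the index bookkeeping: reconciling the sequence $x \conc S$ that appears in the statement with the sequence $S \oplus \bPath{\lamma{x}{\Box}}$ demanded by Lemma~\ref{lemma:auxSubstitutionPreservesReduction}(2). This is why I would make the set-insensitivity of $\wredDisj{-}$ explicit before invoking that lemma. In particular, no appeal to Lemma~\ref{lem:weakeningOfBindingPathPreservesReduction} or Lemma~\ref{lemaReduccionDeSustituciones} is needed for this lemma; those are only required for the companion congruence property for application.
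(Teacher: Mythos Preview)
Your proposal is correct and matches the paper's (implicit) approach: the paper states the lemma without proof, only noting that the \emph{application} congruence lemmas are the ones that need Lem.~\ref{lem:weakeningOfBindingPathPreservesReduction} and Lem.~\ref{lemaReduccionDeSustituciones}(2), which is precisely what you observe. Your unfolding of Def.~\ref{def:chainReduction} for item~2 and the induction on reduction length together with Lem.~\ref{lemma:auxSubstitutionPreservesReduction}(2) for item~1 (including the set-insensitivity remark to reconcile $x\conc S$ with $S\oplus\bPath{\lamma{x}{\Box}}$) are exactly the routine steps one would expect.
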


    \begin{lemma}[Application I]
    \label{lemaReduccionEnCadenaApp1}
    If $A \redEnc{S,0} A'$ and $B \redEnc{S,0} B'$, then
    $\appma{A}{B} \redEnc{S,0} \appma{A'}{B'}$.
    \end{lemma}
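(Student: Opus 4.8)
The plan is to note that at level $k=0$ chain reduction is by definition just ordinary many-step labeled reduction (Def.~\ref{def:chainReduction}), so that the statement amounts to the congruence of $\wreddDisj{S}$ under the two application constructors: from $A \wreddDisj{S} A'$ and $B \wreddDisj{S} B'$ conclude $\appma{A}{B} \wreddDisj{S} \appma{A'}{B'}$. I would first prove the one-step version and then iterate, exploiting that $\wredDisj{S}$ is already closed under arbitrary contexts modulo the binding-path side condition, and that application contexts never contribute binders.

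For the one-step version, suppose $A_1 \wredDisj{S} A_2$ by contracting a redex $\Delta$, say $A_1 = C_2[\Delta]$ and $A_2 = C_2[\Delta']$ with $\Delta'$ the reduct. Taking $C_1 \igdef \appma{\Box}{B}$ we have $\bPath{C_1} = \bPath{\Box} = \epsilon$, since the binding path of a labeled context is that of its label-erasure and an application adds no binders. Hence Lem.~\ref{lemma:auxSubstitutionPreservesReduction}(2) applied with this $C_1$ gives $\appma{A_1}{B} = C_1[C_2[\Delta]] \wredDisj{S \oplus \epsilon} C_1[C_2[\Delta']] = \appma{A_2}{B}$, i.e. $\appma{A_1}{B} \wredDisj{S} \appma{A_2}{B}$. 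Iterating along the sequence $A = A_0 \wredDisj{S} \cdots \wredDisj{S} A_n = A'$ yields $\appma{A}{B} \wreddDisj{S} \appma{A'}{B}$, and the symmetric argument with $C_1 \igdef \appma{A'}{\Box}$ (again $\bPath{C_1} = \epsilon$) turns $B \wreddDisj{S} B'$ into $\appma{A'}{B} \wreddDisj{S} \appma{A'}{B'}$. Concatenating the two reductions gives $\appma{A}{B} \wreddDisj{S} \appma{A'}{B} \wreddDisj{S} \appma{A'}{B'}$, as desired.

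I do not expect a serious obstacle here; the only point meriting care — and the one I would flag — is the label bookkeeping: in $\appma{A}{B}$ the displayed label binds occurrences in $A$, so one should observe that the redex $\Delta$ is self-contained with respect to its own bound label $a$, that inserting it into $\appma{\Box}{B}$ is consistent, and that the reduct introduces no fresh non-$\star$ free labels (indeed $\fl{A_0\sustLabel{a}\sust{x}{B_0}} \subseteq \fl{\Delta} \cup \{\star\}$). This is precisely why $\bPath{\cdot}$ is insensitive to application contexts, so that $S$ is never weakened or strengthened when descending into an argument or operator, which makes the $k=0$ case routine — unlike the $k>0$ analogues of the congruence lemmas, where Lem.~\ref{lem:weakeningOfBindingPathPreservesReduction}, Lem.~\ref{lemaReduccionDeSustituciones}(2), and invariance of $\freeVars{\cdot} \setIntersection S$ under $\wreddDisj{S}$ genuinely enter.
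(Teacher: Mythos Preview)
Your proof is correct and is the natural argument: unfold $\redEnc{S,0}$ to $\wreddDisj{S}$ via Def.~\ref{def:chainReduction} and obtain congruence of $\wreddDisj{S}$ under application from Lem.~\ref{lemma:auxSubstitutionPreservesReduction}(2), using that $\bPath{\appma{\Box}{B}} = \bPath{\appma{A'}{\Box}} = \epsilon$. The paper's hint bundles the two application lemmas together and cites Lem.~\ref{lem:weakeningOfBindingPathPreservesReduction}, Lem.~\ref{lemaReduccionDeSustituciones}(2) and the invariance of $\freeVars{\cdot}\cap S$ under $\wreddDisj{S}$; as you correctly point out, these are genuinely needed only for Lem.~\ref{lemaReduccionEnCadenaApp2}, whereas the $k=0$ case here is the routine congruence you give.
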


    \begin{lemma}[Application II]
    \label{lemaReduccionEnCadenaApp2}
    \begin{enumerate}
    \item Let $A \redEnc{S,n+1} \lamma{x}{A'} = \lamm{a\,\overline{a}^n}{x\, \overline{x}^n}{x}$
    and $B \redEnc{S,m} \lamm{\overline{b}^m}{\overline{y}^m}{B'}$.
    Assume, moreover, $\lamma{x}{A'} \disj S$ and
    $B' \disj S$.  Then $\appma{A}{B} \redEnc{S,n+m} \lamm{\overline{a}^n \overline{b}^m}{\overline{x}^n \overline{y}^m}{B'}$.

    \item Let $A \redEnc{S,k+1} \lamma{x}{A'} = \lamm{a\, \overline{a}^k}{x\,\overline{x}^k}{A''}$
    and $B \redEnc{S,0} B'$.
    Assume, moreover,  $\lamma{x}{A'} \disj S$ and $B' \disj S$.
    Then $\appma{A}{B} \redEnc{S,k} \lamm{\overline{a}^k}{\overline{x}^k}{A''\sust{x}{B'}}$.

    \end{enumerate}
    \end{lemma}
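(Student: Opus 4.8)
The plan is to prove both items by a single template: unfold the chain reduction of $A$ once, use it to expose a head redex inside $\appma{A}{B}$, contract that redex, and then absorb the residual chain reductions of $A$ and of $B$. By Def.~\ref{def:chainReduction} the hypothesis on $A$ decomposes as $A \wreddDisj{S} \lamma{x}{A_1}$ (with $a$ the label of the exhibited head abstraction) together with $A_1 \redEnc{S,k'} A'$, where $A' = \lamm{\overline{a}^n}{\overline{x}^n}{x}$ and $k'=n$ in item~1, and $A' = \lamm{\overline{a}^k}{\overline{x}^k}{A''}$ and $k'=k$ in item~2. By congruence of $\wredd$ under application contexts --- the $\redEnc{S,0}$ instance of Lem.~\ref{lemaReduccionEnCadenaApp1} --- one gets $\appma{A}{B} \wreddDisj{S} \appma{(\lamma{x}{A_1})}{B}$, which is a redex; in item~2 one further reduces the argument, $\appma{(\lamma{x}{A_1})}{B} \wreddDisj{S} \appma{(\lamma{x}{A_1})}{B'}$ (here $B \redEnc{S,0} B'$ is plain $\wredd$). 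Contracting the head redex then yields $A_1\sustLabel{a}\sust{x}{B}$ (item~1) resp.\ $A_1\sustLabel{a}\sust{x}{B'}$ (item~2).

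The side condition of the head contraction, $\appma{(\lamma{x}{A_1})}{B}\disj S$ (resp.\ with $B'$), follows from the hypotheses $\lamma{x}{A'}\disj S$ and $B'\disj S$ once one knows that chain reduction preserves $\freeVars{\cdot}\setIntersection S$: this is immediate from the corresponding equality for $\wredd$ recalled in the text, together with the convention that each stripped bound variable is fresh for $S$. Hence $\freeVars{A_1}\setIntersection S = \freeVars{A'}\setIntersection S$ and $\freeVars{B}\setIntersection S = \freeVars{B'}\setIntersection S$, and then $\lamma{x}{A'}\disj S$ and $B'\disj S$ give $\freeVars{\appma{(\lamma{x}{A_1})}{B}}\setIntersection S=\emptyset$ (removing $x$ in the function part absorbs the only variable of $A_1$ that may lie in $S$).

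Next I would lift two closure properties of chain reduction from their $\wredd$ counterparts, both by induction on the chain index: (i) a \emph{prepend} property, $A \wreddDisj{S} A_0$ and $A_0 \redEnc{S,k} B$ imply $A \redEnc{S,k} B$; and (ii) a \emph{substitution} property, $A \redEnc{S,k} B$ with $D\disj S$ implies $A\sust{x}{D}\redEnc{S,k}B\sust{x}{D}$, and likewise $A\sustLabel{a}\redEnc{S,k}B\sustLabel{a}$, the base cases being Lem.~\ref{lemaReduccionDeSustituciones}(2) and Lem.~\ref{lem:weakeningOfBindingPathPreservesReduction}. From (ii) applied to $A_1 \redEnc{S,k'} A'$ I obtain, for item~2, $A_1\sustLabel{a}\sust{x}{B'}\redEnc{S,k}(\lamm{\overline{a}^k}{\overline{x}^k}{A''})\sustLabel{a}\sust{x}{B'}=\lamm{\overline{a}^k}{\overline{x}^k}{A''\sust{x}{B'}}$, the trailing $\sustLabel{a}$ being vacuous on $A''$ by the freshness convention for the bound label $a$ (and Lem.~\ref{lemaReduccionDeSustituciones}(1)); prepending the $\wredd$ prefix via (i) gives the claim. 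For item~1 the same step yields $A_1\sustLabel{a}\sust{x}{B}\redEnc{S,n}\lamm{\overline{a}^n}{\overline{x}^n}{B}$ --- here it is essential that $A' = \lamm{\overline{a}^n}{\overline{x}^n}{x}$, so that it is $B$ itself, not a proper superterm of it, that ends up at the core. I then use a \emph{spine-composition} property: $A\redEnc{S,n}\lamm{\overline{a}^n}{\overline{x}^n}{D}$ and $D\redEnc{S,m}E$ imply $A\redEnc{S,n+m}\lamm{\overline{a}^n}{\overline{x}^n}{E}$ (induction on $n$, from (i) and Lem.~\ref{lemaReduccionEnCadenaAbs}(2)). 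Instantiating with $D=B$ and $E=\lamm{\overline{b}^m}{\overline{y}^m}{B'}$ and prepending via (i) gives $\appma{A}{B}\redEnc{S,n+m}\lamm{\overline{a}^n}{\overline{x}^n}{\lamm{\overline{b}^m}{\overline{y}^m}{B'}}=\lamm{\overline{a}^n \overline{b}^m}{\overline{x}^n \overline{y}^m}{B'}$, as required.

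The main obstacle I anticipate is not any single computation but getting the staging right: since chain reduction can only strip $\lambda$'s at the root and never reaches inside the argument of an application, in item~1 one is forced to contract the head redex while $B$ is still unreduced and only afterwards run $B$'s chain reduction --- which is legitimate precisely because $A'$ is the iterated identity, i.e.\ exactly the role of the side condition $m>0\imp A'=\lamm{\overline{a}^n}{\overline{x}^n}{x}$ of rule \laRegla{LApp2}. Checking the $\disj S$ side conditions all along the staged reduction, and verifying that the $\sustLabel{a}$'s introduced at the head contraction do not leak into the stated right-hand side, are the delicate bookkeeping points, though none of them is individually hard.
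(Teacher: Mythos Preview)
Your proposal is correct and follows essentially the same approach as the paper, which only sketches the proof by naming the three key ingredients you use: Lem.~\ref{lem:weakeningOfBindingPathPreservesReduction}, Lem.~\ref{lemaReduccionDeSustituciones}(2), and the invariance of $\freeVars{\cdot}\setIntersection S$ under $\wreddDisj{S}$. Your auxiliary \emph{prepend}, \emph{substitution-for-chain-reduction}, and \emph{spine-composition} lemmas are exactly the natural intermediate steps one extracts from those ingredients when unfolding Def.~\ref{def:chainReduction}, and your analysis of why item~1 forces the head contraction to happen with $B$ still unreduced (exploiting that $A'$ is the iterated identity) is precisely the point of the side condition in \laRegla{LApp2}.
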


    We can now replace (\ref{eqn:parallelImpliesLabeled}) by the following statement.

\begin{proposition}
\label{prop:headImpliesChainReduction}
    $A \superdevet{S,k} B$ implies $A \redEnc{S,k} B$.
\end{proposition}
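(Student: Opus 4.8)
The natural approach is induction on the derivation of $A \superdevet{S,k} B$, following the five inference schemes of Definition~\ref{def:simultaneousLabeledWeakSuperdevs}. The congruence lemmas for chain reduction (Lemmas~\ref{lemaReduccionEnCadenaAbs}, \ref{lemaReduccionEnCadenaApp1} and \ref{lemaReduccionEnCadenaApp2}) have been stated precisely so that they match the inductive cases, so most cases should reduce to citing the appropriate lemma. The base case \laRegla{LVar} is immediate since $x \wreddDisj{S} x$ gives $x \redEnc{S,0} x$ by reflexivity of $\wredd$. For \laRegla{LAbs1}, the premise $A \superdevet{x \conc S,0} A'$ gives $A \redEnc{x \conc S,0} A'$ by IH, and Lemma~\ref{lemaReduccionEnCadenaAbs}(1) yields $\lamma{x}{A} \redEnc{S,0} \lamma{x}{A'}$. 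Likewise \laRegla{LAbs2} uses the IH with Lemma~\ref{lemaReduccionEnCadenaAbs}(2), and \laRegla{LApp1} uses the IH on both premises with Lemma~\ref{lemaReduccionEnCadenaApp1}.

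The interesting case is \laRegla{LApp2}, where $\appma{A}{B} \superdevet{S,n+m} A'\sustLabel{a}\sust{x}{B'}$ from premises $A \superdevet{S,n+1} \lamma{x}{A'}$, $B \superdevet{S,m} B'$, $\appma{(\lamma{x}{A'})}{B'} \disj S$, and the side condition that $m > 0$ forces $A' = \lamm{\overline{a}^n}{\overline{x}^n}{x}$. Here I split on whether $m = 0$ or $m > 0$. If $m = 0$: by IH, $A \redEnc{S,n+1} \lamma{x}{A'}$ and $B \redEnc{S,0} B'$; writing $\lamma{x}{A'} = \lamm{a\,\overline{a}^k}{x\,\overline{x}^k}{A''}$ (so $n = k$), I need $\lamma{x}{A'} \disj S$ and $B' \disj S$ — both follow from $\appma{(\lamma{x}{A'})}{B'} \disj S$ together with the fact that $\wredd$ does not create free variables, so that $\freeVars{B'} \subseteq \freeVars{B}$ and $B \disj S$... wait — actually $B \disj S$ is not directly a hypothesis, but $\appma{(\lamma{x}{A'})}{B'} \disj S$ gives $\freeVars{B'} \cap S = \emptyset$ directly, i.e. $B' \disj S$, and $\lamma{x}{A'} \disj S$ similarly. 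Then Lemma~\ref{lemaReduccionEnCadenaApp2}(2) gives $\appma{A}{B} \redEnc{S,k} \lamm{\overline{a}^k}{\overline{x}^k}{A''\sust{x}{B'}}$, and it remains to check that $\lamm{\overline{a}^k}{\overline{x}^k}{A''\sust{x}{B'}}$ equals $A'\sustLabel{a}\sust{x}{B'}$ — here $A' = \lamm{\overline{a}^k}{\overline{x}^k}{A''}$, the outermost binder of $\lamma{x}{A'}$ carries label $a$ which gets replaced by $\star$ under $\sustLabel{a}$ during the contraction that produced the redex, and $x \notin \overline{x}^k$ by the variable convention, so pushing $\sustLabel{a}$ and $\sust{x}{B'}$ inside past the $\overline{x}^k$ binders gives the claimed equality (using that the $a_i$ are distinct from $a$ by initial labeling, or more carefully tracking which labels are affected). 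If instead $m > 0$: then $A' = \lamm{\overline{x}^n}{x}$ with all $n$ inner binders trivial, the result $A'\sustLabel{a}\sust{x}{B'}$ unfolds to $\lamm{\overline{a}^n}{\overline{x}^n}{B'}$ after substitution, and by IH together with Lemma~\ref{lemma:propertiesOfLabeledParallelReduction}(1) applied to $B \superdevet{S,m} B'$ we get $B' = \lamm{\overline{b}^m}{\overline{y}^m}{B''}$; then Lemma~\ref{lemaReduccionEnCadenaApp2}(1) delivers exactly $\appma{A}{B} \redEnc{S,n+m} \lamm{\overline{a}^n\overline{b}^m}{\overline{x}^n\overline{y}^m}{B''}$, which is the required term.

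\textbf{Main obstacle.} The delicate point is bookkeeping the label manipulations in the \laRegla{LApp2} case: verifying that the term produced by Lemma~\ref{lemaReduccionEnCadenaApp2} is \emph{syntactically} the term $A'\sustLabel{a}\sust{x}{B'}$ named in the conclusion of \laRegla{LApp2}, including the fate of the binding label $a$ on the outermost abstraction and the fact that the inner labels $\overline{a}^n$ and the labels of $B'$ are preserved. This requires keeping careful track of which occurrences of $a$ are free versus bound, invoking the assumption that terms are initially labeled (distinct labels on distinct abstractions) so that $\sustLabel{a}$ behaves as expected and does not clobber unrelated labels, and using the variable convention to commute the substitutions with the sequence of $\lambda$-binders. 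A secondary subtlety is making sure the disjointness side conditions required by Lemma~\ref{lemaReduccionEnCadenaApp2} ($\lamma{x}{A'} \disj S$ and $B' \disj S$) are extracted correctly from $\appma{(\lamma{x}{A'})}{B'} \disj S$ — this is where one uses $\freeVars{\appma{C}{D}} = \freeVars{C} \cup \freeVars{D}$ and that chain/labeled reduction does not introduce free variables.
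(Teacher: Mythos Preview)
Your proposal is correct and follows exactly the paper's approach: induction on the derivation of $A \superdevet{S,k} B$, dispatching \laRegla{LVar} trivially, \laRegla{LAbs1}/\laRegla{LAbs2} via Lemma~\ref{lemaReduccionEnCadenaAbs}, \laRegla{LApp1} via Lemma~\ref{lemaReduccionEnCadenaApp1}, and \laRegla{LApp2} via the two parts of Lemma~\ref{lemaReduccionEnCadenaApp2} according to whether $m=0$ or $m>0$. The paper's own proof is a one-line sketch citing precisely these lemmas; your case split on $m$ and the label/disjointness bookkeeping you spell out are exactly the details the paper leaves implicit.
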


    \begin{proof}
    By induction on the derivation of $A \superdevet{S,k} B$. Each case
    is straightforward, resorting to Lem.~\ref{lemaReduccionEnCadenaAbs}
    for the rules \laRegla{LAbs1} and \laRegla{LAbs2},
    Lem.~\ref{lemaReduccionEnCadenaApp1} for the rule \laRegla{LApp1}
    and Lem.~\ref{lemaReduccionEnCadenaApp2} for the rule \laRegla{LApp2}.
    \end{proof}

The proof of Thm.~\ref{thm:equivalence}(1) proceeds as follows.  From
$M \superdevelops{S,0} N$ and
Lem.~\ref{lemma:propertiesOfLabeledParallelReduction}(3), there exist
$M\et, N\et$ s.t.  $M\et \superdevet{S,0} N\et$. From
Prop.~\ref{prop:headImpliesChainReduction}(1), $M\et \redEnc{S,0}
N\et$. Finally, from the definition of chain reduction, $M\et
\wreddDisj{S} N\et$. The second item of
Prop.~\ref{prop:headImpliesChainReduction} will be used in the next
section.

\begin{corollary}
\label{coroPropertiesOfLabeledParallelReductionInclusion}
$\wredDisj{S}\ \subseteq\ \superdevet{S,0}\ \subseteq\ \wreddDisj{S}$.
\end{corollary}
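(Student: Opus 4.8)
The plan is to establish the two inclusions separately, each essentially by unwinding the relevant definitions and invoking results already proved. For the first inclusion, $\wredDisj{S} \subseteq \superdevet{S,0}$, I would argue as follows. Suppose $A \wredDisj{S} B$, i.e. $A = C[\appma{(\lamma{x}{A_1})}{B_1}]$ with $\appma{(\lamma{x}{A_1})}{B_1} \disj \bPath{C} \oplus S$ and $B = C[A_1\sustLabel{a}\sust{x}{B_1}]$. I want to show $A \superdevet{S,0} B$. The idea is to build a derivation of $A \superdevet{S,0} B$ that traverses the context $C$ using the reflexive/congruence behaviour of $\superdevet{\cdot,0}$ (rules \laRegla{LVar}, \laRegla{LAbs1}, \laRegla{LApp1}, each applied with $k = 0$ and with the sequence $S$ suitably extended by the binding path as one descends under abstractions in $C$), and at the hole applies \laRegla{LApp2} with $n = 0$, $m = 0$: the premises become $\lamma{x}{A_1} \superdevet{S',0} \lamma{x}{A_1}$ (reflexivity, via \laRegla{LAbs1} and reflexivity of $\superdevet{\cdot,0}$), $B_1 \superdevet{S',0} B_1$ (reflexivity), the disjointness side condition $\appma{(\lamma{x}{A_1})}{B_1} \disj S'$, and the implication $m > 0 \imp \ldots$ which holds vacuously since $m = 0$; here $S'$ is $\bPath{C} \oplus S$ restricted to the variables bound above the hole, and the disjointness side condition is exactly the hypothesis of the $\wredDisj{S}$ step. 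The conclusion of \laRegla{LApp2} is then $\appma{A_1}{B_1}$-reduced, i.e. $A_1\sustLabel{a}\sust{x}{B_1}$, as required. The only mild care needed is bookkeeping of how $S$ grows as one passes under $\lambda$'s in $C$, and checking that reflexivity of $\superdevet{S,0}$ (already noted after Definition \ref{def:simultaneousLabeledWeakSuperdevs}) supplies the congruence steps.

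For the second inclusion, $\superdevet{S,0} \subseteq \wreddDisj{S}$, I would simply apply Proposition \ref{prop:headImpliesChainReduction} with $k = 0$: if $A \superdevet{S,0} B$, then $A \redEnc{S,0} B$, and by the first clause of Definition \ref{def:chainReduction}, $A \redEnc{S,0} B$ is by definition $A \wreddDisj{S} B$. So this half is immediate from results already in hand.

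The main obstacle, such as it is, lies in the first inclusion: making precise the claim that $\superdevet{\cdot,0}$ acts as a congruence that can "reach into" an arbitrary context $C$ while correctly tracking the sequence parameter. One must verify that for every context $C$ and every $A, B$ with $A \superdevet{\bPath{C} \oplus S, 0} B$ one has $C[A] \superdevet{S,0} C[B]$ — an easy induction on $C$ using \laRegla{LVar}, \laRegla{LAbs1}, \laRegla{LApp1} together with reflexivity for the "other" subterm in the application cases — and that a single \laRegla{LApp2} step with $n = m = 0$ realizes one $\wred$-contraction. Everything else is routine unwinding of definitions; no delicate argument about labels or bound variables is needed here because $k = 0$ keeps us in the regime where the side conditions of $\superdevet{\cdot,0}$ coincide with those of $\wredDisj{S}$.
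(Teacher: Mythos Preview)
Your approach coincides with the paper's: the second inclusion is exactly as you describe (Proposition~\ref{prop:headImpliesChainReduction} with $k=0$, then the base clause of Definition~\ref{def:chainReduction}), and the first is by structural induction on $A$ (equivalently, on the context $C$), using the congruence rules together with reflexivity, and firing \laRegla{LApp2} at the hole.

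One small slip to fix: in your instance of \laRegla{LApp2} with $n=m=0$, the first premise must carry index $n+1=1$, i.e.\ you need $\lamma{x}{A_1}\superdevet{S',1}\lamma{x}{A_1}$, not $\superdevet{S',0}$. This is obtained from reflexivity $A_1\superdevet{S',0}A_1$ via \laRegla{LAbs2} (which bumps $k$ to $k+1$ without extending $S$), not via \laRegla{LAbs1}. With that adjustment the argument is correct.
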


    \begin{proof}
    \begin{enumerate}
    \item Suppose $A \wredDisj{S} B$. By induction on $A$ follows $A \superdevet{S,0} B$.
    \item Suppose $A \superdevet{S,0} B$.
    From Prop.~\ref{prop:headImpliesChainReduction}, $A \redEnc{S,0} B$.
    By Def.~\ref{def:chainReduction}, this implies $A \wreddDisj{S} B$.
    \end{enumerate}
    \end{proof}



\subsection{From Labeled Reduction to Supersteps}
\label{subsec:labeledToParallel}


A proof of Thm.~\ref{thm:equivalence}(2) requires reasoning over the
more general judgement $M\et \wreddDisj{S} B$ in which $B$ is not
necessarily in normal form but which is related to $N\et$ in the sense
that $B \wreddDisj{S} N\et$. In turn, for this it is convenient to
have a direct inductive characterization of the form of $N\et$ in
terms of $M\et$ (cf. Prop~\ref{prop:confluenceOfWeakLC}).  The
following notion of full weak superdevelopment provides such a
definition.  
\begin{definition}[Full weak superdevelopment]
The full weak superdevelopment of $A$ under $S,k$, written
$A\est{S,}$, is defined by induction on $A$ as follows:

\begin{center}
$\begin{array}{rcl}
    x\est{S,0} & \igdef & x\\
    (\lamma{x}{A})\est{S,0} & \igdef &  \lamma{x}{A\est{x\conc S,0}}\\
    (\lamma{x}{A})\est{S,k+1} & \igdef &  \lamma{x}{A\est{S,k}}\\
\appma{A}{B}\est{S,k} & \igdef &
  \left\{
     \begin{array}{ll}
       A'\sustLabel{a}\sust{x}{B'}, &  \text{if }(\star)\\
       \appma{A\est{S,0}}{B\est{S,0}},  & \text{if not }(\star)\text{ and }k = 0
     \end{array}
  \right.
\end{array}$
\end{center}
where $(\star)$ is the following condition: for some $n, m \geq 0$:
\begin{enumerate}
\item $A\est{S,n+1} = \lamma{x}{A'} = \lamm{a a_1 \hdots a_n}{x x_1 \hdots x_n}{A''}$
\item $B\est{S,m} = B' = \lamm{b_1 \hdots b_m}{y_1 \hdots y_m}{B''}$
\item $n + m = k$
\item $m > 0 \imp A'' = x$
\item $\appma{(\lamma{x}{A'})}{B'} \disj S$.

\end{enumerate}
\end{definition}

    \begin{remark}
      $A\est{S,k}$ may be undefined for $k > 0$. For example,
      $\appma{x}{y}\est{S,1}$ is undefined for any $S$ given that the full
      weak superdevelopment of $\appma{x}{y}$ does not produce an
      abstraction. Note, however, that $A\est{S,0}$ always exists.
      In general, if $A\est{S,k}$ is defined, then
        there exist variables
         $x_i$ and labels $a_i$, with $1 \leq i \leq k$, and $A' \in
        \termsEtiq$ s.t.
        $A\est{S,k} = \lamm{a_1 \hdots a_k}{x_1 \hdots x_k}{A'}$.
        Compare this with Lem.~\ref{lemma:propertiesOfLabeledParallelReduction}(1).
    \end{remark}

    \begin{lemma}
    \label{lemma:correctnessOfDefOfCompleteSuperdevelopment}
    If there exists an $A'$ such that $A\est{S,k}=A'$, it is unique.
    \end{lemma}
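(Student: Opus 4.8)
The plan is to prove that the partial function $A \mapsto A\est{S,k}$ is well-defined, i.e. that the recursive clauses do not assign two different values. The only source of potential ambiguity lies in the clause for $\appma{A}{B}\est{S,k}$, where we must verify two things: first, that the case split on $(\star)$ vs. ``not $(\star)$ and $k=0$'' is genuinely deterministic, and second, that when $(\star)$ holds the resulting value $A'\sustLabel{a}\sust{x}{B'}$ does not depend on the choice of the witnesses $n,m$. All the other clauses ($x$, $\lamma{x}{A}$ with either index) are syntactically driven and assign at most one value, so the real content is entirely in the application case.

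I would argue by induction on $A$. The inductive hypothesis gives that $A\est{T,j}$ and $B\est{T,j}$ are each unique (when defined), for all $T,j$, where $A,B$ are the immediate subterms. For the application case $\appma{A}{B}\est{S,k}$, I first observe that whether $(\star)$ holds is a property determined by the (unique, by IH) values $A\est{S,n+1}$ and $B\est{S,m}$ for the various $n,m$ with $n+m=k$; hence the choice of branch is not itself ambiguous once we settle the witness question. The key step is therefore: if $(\star)$ holds with witnesses $(n_1,m_1)$ and also with witnesses $(n_2,m_2)$, then $A'_1\sustLabel{a_1}\sust{x_1}{B'_1} = A'_2\sustLabel{a_2}\sust{x_2}{B'_2}$. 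The heart of this is a uniqueness-of-the-shape argument: by the displayed remark, $A\est{S,j}$ whenever defined has exactly $j$ leading abstractions; combined with the IH (uniqueness of $A\est{S,j}$ for each $j$) one shows that if $A\est{S,n_1+1}$ and $A\est{S,n_2+1}$ are both defined then $n_1 = n_2$ — because the full weak superdevelopment of $A$, \emph{as a term}, is a fixed syntactic object, and its number of leading abstractions $\lamma{x}{\cdot}$ pins down $n+1$. (More precisely: the clauses for $(\lamma{x}{A_0})\est{S,j}$ show that $A\est{S,j}$ for $j\ge 1$ forces $A$ to be an abstraction $\lamma{x}{A_0}$ and relates $A\est{S,j}$ to $A_0\est{S,j-1}$ or $A_0\est{x\conc S,0}$; peeling these off and invoking the IH forces the exponent.) Symmetrically $m_1 = m_2$ from $n_1 + m_1 = k = n_2 + m_2$. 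Once $n_1=n_2$ and $m_1=m_2$, the IH gives $A\est{S,n_1+1}=A\est{S,n_2+1}$ and $B\est{S,m_1}=B\est{S,m_2}$, so $A',a,x$ and $B',b,y$ coincide and the two outputs are literally equal.

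The one subtlety worth flagging is the interaction of $(\star)$ with $n=0$: when $n=0$ the condition (1) reads $A\est{S,1}=\lamma{x}{A'}=\lamma{a}{x}{A''}$, and the $m>0 \imp A''=x$ clause must be handled; but this is a constraint on the (unique) value $A\est{S,1}$, not a new source of choice, so it is absorbed into the argument above. Likewise one must check that the ``not $(\star)$ and $k=0$'' branch cannot collide with the $(\star)$ branch \emph{for the same $k$}: if $k=0$ and $(\star)$ holds with $n+m=0$, i.e. $n=m=0$, then the value is $A'\sustLabel{a}\sust{x}{B'}$ and there is no ambiguity because $(\star)$ and $\lnot(\star)$ are exclusive by definition; if $k>0$ only the $(\star)$ branch is available. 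So there is never an overlap.

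I expect the main obstacle to be the shape-uniqueness step, namely showing that the number of leading abstractions of $A\est{S,k}$ determines $k$ and hence that the witnesses $(n,m)$ in $(\star)$ are unique. This is where one genuinely uses the remark about the leading-abstraction count together with the inductive hypothesis, and where careless bookkeeping about $S$ vs. $x\conc S$ in the $n=0$ corner (clause \textbf{LAbs1}-style vs. \textbf{LAbs2}-style, transposed to $\est{}$) could cause trouble; everything after that is a routine appeal to the IH and the fact that capture-avoiding substitution and $\sustLabel{a}$ are themselves functions.
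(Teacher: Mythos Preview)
The paper states this lemma without proof, so there is nothing to compare against. Your overall strategy---induction on $A$, with the application case reducing to uniqueness of the witnesses $(n,m)$ in condition $(\star)$---is the right one, but the step where you argue that the witnesses are unique has a genuine gap.

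You claim that ``$A\est{S,j}$ whenever defined has exactly $j$ leading abstractions'' and hence that ``if $A\est{S,n_1+1}$ and $A\est{S,n_2+1}$ are both defined then $n_1 = n_2$''. Both assertions are false. The remark after the definition only guarantees \emph{at least} $k$ leading abstractions; for instance $(\lamm{a}{x}{x})\est{S,0} = \lamm{a}{x}{x}$ has one leading abstraction although $k=0$. And for $A = \lamm{a}{x}{\lamm{b}{y}{y}}$ both $A\est{S,1}$ and $A\est{S,2}$ are defined. Your parenthetical elaboration that ``$A\est{S,j}$ for $j\ge 1$ forces $A$ to be an abstraction'' is likewise false: $A$ may well be an application whose $(\star)$-clause produces a term with $j$ leading abstractions. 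So the argument as written does not establish uniqueness of $(n,m)$, and the sentence ``the full weak superdevelopment of $A$, as a term, is a fixed syntactic object'' conflates the distinct objects $A\est{S,j}$ for varying $j$.

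What actually forces uniqueness is condition~(4). If $(n_1,m_1)$ and $(n_2,m_2)$ are both witnesses with $n_1 < n_2$, then $m_1 > m_2 \ge 0$, so $m_1 > 0$, and condition~(4) forces $A\est{S,n_1+1} = \lamm{a\,\overline{a}^{n_1}}{x\,\overline{x}^{n_1}}{x}$: a chain of exactly $n_1+1$ abstractions ending in the outermost bound variable. One then proves, as part of a strengthened inductive hypothesis on $A$, that whenever $A\est{S,j}$ is a chain of exactly $j$ abstractions ending in a variable, $A\est{S,j'}$ is undefined for every $j' > j$; this contradicts the definedness of $A\est{S,n_2+1}$. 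Carrying this extra clause through the induction (it is immediate for variables and abstractions; for applications it uses the same clause on the components) closes the gap and makes your argument go through.
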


    Some basic properties of this notion follow. 

\begin{lemma}
\label{lemma:completeSuperDevAndReduction}
    If $A\est{S,k}$ exists:
\begin{enumerate}
\item 
$A \superdevet{S,k} A\est{S,k}$ when $A\est{S,k}$ exists.
\item $A \wreddDisj{S} A\est{S,0}$.

\end{enumerate}

\end{lemma}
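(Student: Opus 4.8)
The plan is to prove both items by induction on $A$ (equivalently, on the structure of the definition of $A\est{S,k}$), treating the two claims simultaneously or, since item~(2) is essentially the $k=0$ specialization combined with the definition of chain reduction, deriving~(2) from~(1) via Corollary~\ref{coroPropertiesOfLabeledParallelReductionInclusion}. Indeed, once~(1) is established, $A \superdevet{S,0} A\est{S,0}$ gives, by the second inclusion of Corollary~\ref{coroPropertiesOfLabeledParallelReductionInclusion}, that $A \wreddDisj{S} A\est{S,0}$, which is exactly~(2). So the real content is item~(1).

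For item~(1), I would do induction on the term $A$, following the clauses of the definition of $A\est{S,k}$. The variable case $x\est{S,0}=x$ is immediate from \laRegla{LVar} and reflexivity. For an abstraction $\lamma{x}{A}$: if $k=0$, then $(\lamma{x}{A})\est{S,0}=\lamma{x}{A\est{x\conc S,0}}$, and the induction hypothesis gives $A \superdevet{x\conc S,0} A\est{x\conc S,0}$, so \laRegla{LAbs1} yields the conclusion; if $k=k'+1$, then $(\lamma{x}{A})\est{S,k'+1}=\lamma{x}{A\est{S,k'}}$, the induction hypothesis gives $A \superdevet{S,k'} A\est{S,k'}$ (this uses that $A\est{S,k'}$ is defined, which follows from $A\est{S,k'+1}$ being defined via the definition and Lemma~\ref{lemma:correctnessOfDefOfCompleteSuperdevelopment}), and \laRegla{LAbs2} concludes. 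For an application $\appma{A}{B}$, split on whether $(\star)$ holds. If $(\star)$ holds with witnesses $n,m$, apply the induction hypothesis to $A$ and $B$ to obtain $A \superdevet{S,n+1} A\est{S,n+1}=\lamma{x}{A'}$ and $B \superdevet{S,m} B\est{S,m}=B'$; conditions (3)--(5) of $(\star)$ are precisely the side conditions $n+m=k$, ($m>0 \imp A'=\lamm{\overline{a}^n}{\overline{x}^n}{x}$ — here noting $A''=x$ gives $A' = \lamm{a_1\hdots a_n}{x_1\hdots x_n}{x}$ modulo the leading label, matching the pattern required by \laRegla{LApp2}), and $\appma{(\lamma{x}{A'})}{B'}\disj S$, so \laRegla{LApp2} applies and gives $\appma{A}{B} \superdevet{S,k} A'\sustLabel{a}\sust{x}{B'} = \appma{A}{B}\est{S,k}$. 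If $(\star)$ fails and $k=0$, the definition gives $\appma{A}{B}\est{S,0}=\appma{A\est{S,0}}{B\est{S,0}}$, and the induction hypothesis together with \laRegla{LApp1} concludes.

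The main obstacle I anticipate is the bookkeeping in the application case when $(\star)$ holds: one must check that the pattern condition ``$m>0 \imp A'=\lamm{\overline{x}^n}{x}$'' demanded by \laRegla{LApp2} lines up exactly with clause~(4) of $(\star)$, i.e. that ``$A''=x$'' in the $n{+}1$-fold unfolding of $A\est{S,n+1}$ translates correctly into the $n$-fold pattern after stripping the outermost binder $\lamma{x}{-}$, and that the labels/variables $\overline{a}^n,\overline{x}^n$ are consistently named; this is purely notational but is the place where an error would hide. A secondary point to be careful about is well-definedness: in the abstraction-with-$k+1$ and the application-with-$(\star)$ subcases, the induction hypothesis is invoked at a smaller term but possibly larger index, so one should remark (citing the Remark following the definition and Lemma~\ref{lemma:correctnessOfDefOfCompleteSuperdevelopment}) that the relevant full weak superdevelopments are indeed defined, which is guaranteed by the hypothesis that $\appma{A}{B}\est{S,k}$ (resp. $(\lamma{x}{A})\est{S,k}$) is defined. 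No step requires any genuinely new idea beyond matching each clause of the definition to the corresponding inference rule of $\superdevet{S,k}$.
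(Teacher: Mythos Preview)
Your proposal is correct and follows exactly the approach taken in the paper: item~(1) is proved by induction on $A$, matching each clause of the definition of $A\est{S,k}$ to the corresponding rule of $\superdevet{S,k}$, and item~(2) is then derived from item~(1) via Corollary~\ref{coroPropertiesOfLabeledParallelReductionInclusion}. The careful bookkeeping you outline for the \laRegla{LApp2} case (matching condition~(4) of $(\star)$ to the side condition on $A'$) is precisely the only nontrivial detail, and you have it right.
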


    \begin{proof}
    The first item is proved by induction on $A$. The second follows from
    the first item and Cor.~\ref{coroPropertiesOfLabeledParallelReductionInclusion}.
    \end{proof}

    We now relate full weak superdevelopments with simultaneous
    labeled weak superdevelopments.

    \begin{lemma}
    \label{lemma:SuperdevsCoincidenEnSuperdevCompleto}

    \begin{enumerate}


    \item
    Let $A, B \in \termsEtiq$, $S \subseteq \conjVars$, $x \in \conjVars$
    and $k \geq 0$
    s.t. $x \not\in S \setIntersection \freeVars{A}$,
    $x \not\in \freeVars{B}$ and $B \disj S$.
    Then $(A\sust{x}{B})\est{S,k}$ exists iff the following conditions hold:
    \begin{enumerate}
    \item There exist $n, m \geq 0$ s.t. $k = n + m$,
    \item $A\est{S,n}$ and $B\est{S,m}$ exist,
    \item $m > 0$ implies $A\est{S,n}$ has the form $\lamm{a_1 \hdots a_n}{x_1 \hdots x_n}{x}$.
    \end{enumerate}

    Moreover, in that case,
    $(A\sust{x}{B})\est{S,k} = A\est{S,n}\sust{x}{B\est{S,m}}$.

    \item
    If
    $A, B \in \termsEtiq$, $S_1, S_2 \subseteq \conjVars$,
    $k_1, k_2 \geq 0$ s.t.
    $S_1 \supseteq S_2$, $k_1 \leq k_2$ and $A \superdevet{S_1,k_1} B$,
    then $A\est{S_2,k_2}$ exists iff $B\est{S_2,k_2}$ exists,
    and, if they exist, they coincide.

    \item Let $A_i \in \termsEtiq$ for $1 \leq i \leq n$, $S, K \subseteq
    \conjVars$ and $A_i \superdevet{S,k} A_{i+1}$ for all $1 \leq i <
    n$. Then $A_1\est{S,k}$ exists iff $A_n\est{S,k}$ exists,
    and, if they exist, they coincide.

    \end{enumerate}
    \end{lemma}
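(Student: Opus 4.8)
The plan is to prove the three items of Lemma~\ref{lemma:SuperdevsCoincidenEnSuperdevCompleto} essentially in the order stated, treating item (1) as the substantive combinatorial core and deriving items (2) and (3) from it (with item (3) an immediate induction on $n$ using item (2)). For item (1), I would proceed by induction on the structure of $A$, with an inner case analysis on the shape of $A$ and on whether the key condition $(\star)$ fires in the definition of $(A\sust{x}{B})\est{S,k}$. The base case $A = y$ splits according to whether $y = x$ (so $A\sust{x}{B} = B$, and we must take $n = 0$, $m = k$; condition (c) is vacuous when $k = 0$ and forces the $\lambda$-shape when $k > 0$, matching the requirement on $B\est{S,m}$) or $y \neq x$ (so $A\sust{x}{B} = y$, forcing $k = 0$, $n = m = 0$). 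The abstraction case $A = \lamma{y}{A_0}$ uses the convention that $y \notin \freeVars{B}$ and $y \neq x$, so $(A\sust{x}{B})\est{S,k} = \lamma{y}{(A_0\sust{x}{B})\est{\dots}}$, and we recurse: for $k = 0$ the binding path becomes $y\conc S$ and we must check $x \notin (y\conc S)\setIntersection\freeVars{A_0}$, which follows from the hypothesis on $A$ together with the variable convention; for $k+1$ the binding path is unchanged and we recurse at $k$. The application case $A = \appma{A_1}{A_2}$ is the heart of the argument: $(A\sust{x}{B})\est{S,k} = \appma{(A_1\sust{x}{B})}{(A_2\sust{x}{B})}\est{S,k}$, and one must compare the condition $(\star)$ for this term against the conditions (a)--(c) for $A\est{S,k}$, using the induction hypothesis on $A_1$ and $A_2$ to translate $(A_i\sust{x}{B})\est{S,k_i}$ into $A_i\est{S,\cdot}\sust{x}{B\est{S,\cdot}}$ and bookkeeping the splitting of $k$ into summands across both the outer substitution and the inner $(\star)$-decomposition. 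The disjointness side conditions (the fifth clause of $(\star)$, and $\appma{(\lamma{x}{A'})}{B'}\disj S$) transfer using $B\disj S$ and the fact — noted just before Lemma~\ref{lemaReduccionEnCadenaAbs} — that $\wreddDisj{S}$ (hence $\est{S,0}$ and more generally the relevant reductions) preserves intersection with $S$ on free variables.

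For item (2), I would do induction on the derivation of $A \superdevet{S_1,k_1} B$. The rules \laRegla{LVar}, \laRegla{LAbs1}, \laRegla{LAbs2}, \laRegla{LApp1} are handled directly, using the definition of $\est{}$ and, in the abstraction cases, shifting between the binding-path sequence and the counter $k$; monotonicity in $S$ (if $S_1 \supseteq S_2$ then a disjointness-from-$S_2$ condition is weaker) and in $k$ is used to line up the hypotheses $S_1 \supseteq S_2$, $k_1 \le k_2$. The interesting rule is \laRegla{LApp2}, where $A = \appma{A_1}{A_2}$, $A_1 \superdevet{S_1,n+1} \lamma{x}{A_1'}$, $A_2 \superdevet{S_1,m} B_2'$, and $B = A_1'\sustLabel{a}\sust{x}{B_2'}$; here I invoke item (1) (with $S := S_2$, $k := k_2$, and the substituted term $B_2'$) to express $B\est{S_2,k_2}$ in terms of $(A_1'\sustLabel{a})\est{S_2,\cdot}$ and $(B_2')\est{S_2,\cdot}$, then apply the induction hypothesis to the two premises to replace these by $A_1\est{S_2,\cdot}$ and $A_2\est{S_2,\cdot}$ (modulo the $\sustLabel{a}$-relabeling, which commutes with $\est{}$ by Lemma~\ref{lemaReduccionDeSustituciones}(1) provided the freshness condition on $a$ holds, something guaranteed since $a$ is the binding label of $A_1$'s leading abstraction), and finally match this against the definition of $\appma{A_1}{A_2}\est{S_2,k_2}$ via condition $(\star)$. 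One must be careful that the hypotheses of item (1) are met — in particular $x \notin S_2 \setIntersection \freeVars{A_1'}$ and $B_2' \disj S_2$ — which come from the disjointness premise $\appma{(\lamma{x}{A_1'})}{B_2'}\disj S_1$ of \laRegla{LApp2} together with $S_1 \supseteq S_2$. Item (3) is then a routine induction on $n$: the case $n = 1$ is trivial, and the inductive step composes item (2) (applied to $A_1 \superdevet{S,k} A_2$ with $S_1 = S_2 = S$, $k_1 = k_2 = k$) with the induction hypothesis on the chain $A_2, \hdots, A_n$.

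I expect the main obstacle to be the application case of item (1): reconciling the two independent ways the counter $k$ gets partitioned — once as $n + m$ in $(\star)$ for $(A\sust{x}{B})\est{S,k}$ and once again, after pushing the substitution inside via the induction hypothesis, as sums for $A_1$, $A_2$, and $B$ — and checking that an admissible partition on one side yields an admissible one on the other, including the delicate clause (d)/(4) that forces $A'' = x$ exactly when $m > 0$. A secondary subtlety is the interaction of the label-erasing substitution $\sustLabel{a}$ with $\est{}$ and with ordinary substitution: one needs $(A_1'\sustLabel{a})\sust{x}{B'} = (A_1'\sust{x}{B'})\sustLabel{a}$ (Lemma~\ref{lemaReduccionDeSustituciones}(1), whose hypothesis $a \notin \fl{B'}$ holds because $a$ is freshly bound in $A_1$ and reduction does not create free labels of that name), and that $\est{}$ commutes with $\sustLabel{a}$ in the same way. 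Everything else — the variable-convention bookkeeping, the monotonicity in $S$ and $k$, the disjointness transfers — is routine once the counter arithmetic is set up cleanly, and I would state the counter-splitting correspondence as an explicit sub-claim at the start of the application case to keep the induction readable.
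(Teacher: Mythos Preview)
Your proposal is correct and follows essentially the same route as the paper's proof: item~(1) by structural induction on $A$, item~(2) derived from item~(1), and item~(3) by iterating item~(2). Two small points of divergence are worth flagging. First, the paper explicitly isolates the auxiliary fact that $A\est{S,k} = A\est{x\conc S,k}$ whenever $x \notin \freeVars{A}$; you will need exactly this in the abstraction case of item~(1), where after recursing with binding path $y\conc S$ you must relate $B\est{y\conc S,m}$ back to $B\est{S,m}$ in order to match the stated conclusion $(A\sust{x}{B})\est{S,k} = A\est{S,n}\sust{x}{B\est{S,m}}$. Your writeup checks that the \emph{hypotheses} of the lemma transfer to the extended path, but does not name this invariance on the \emph{conclusion} side, and it is the one ingredient the paper singles out. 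Second, for item~(2) the paper proceeds by induction on $A$ rather than on the derivation of $A \superdevet{S_1,k_1} B$; since the superstep rules are syntax-directed (up to the \laRegla{LApp1}/\laRegla{LApp2} and \laRegla{LAbs1}/\laRegla{LAbs2} choices), the two inductions amount to the same case analysis and your approach is equally valid.
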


    \begin{proof}
    The first item is proved by induction on $A$ and resorting to the fact
    that
    $A\est{S,k} = A\est{x \conc S,k}$ if $x \not\in \freeVars{A}$.
    The second item
    is by induction on $A$ and resorts to the first one. The last one is a
    consequence of the second.
    \end{proof}

\begin{proposition}
\label{prop:confluenceOfWeakLC}
\weakLC is confluent.
\end{proposition}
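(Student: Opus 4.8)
The plan is to derive confluence of \weakLC\ from the diamond property of the unlabeled simultaneous reduction relation $\superdevelops{S,0}$, following the standard Tait–Martin-Löf recipe. First I would check that $\wred$-steps on \weakLC\ terms (i.e.\ the $\wredDisj{S}$ relation restricted to contexts with $\bPath{C}=S$) are sandwiched between one and many steps of $\superdevelops{S,0}$: a single \weakLC\ reduction step is a particular superstep (using \laRegla{App1} everywhere except at the contracted redex, where \laRegla{App2} with $n=1,m=0$ applies), and conversely a superstep can be realized by a finite sequence of ordinary \weakLC\ steps. This second inclusion is exactly what Lem.~\ref{lemma:completeSuperDevAndReduction}(2) and Cor.~\ref{coroPropertiesOfLabeledParallelReductionInclusion} give us on the labeled side; erasing labels via Lem.~\ref{lemma:propertiesOfLabeledParallelReduction}(2) transfers it to unlabeled terms. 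Hence $\superdevelops{S,0}$ and $\rred{}$ (weak $\beta$-reduction) have the same reflexive–transitive closure.

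Next I would establish the diamond property for $\superdevelops{S,0}$: if $M \superdevelops{S,0} N_1$ and $M \superdevelops{S,0} N_2$, then there is a common $P$ with $N_1 \superdevelops{S,0} P$ and $N_2 \superdevelops{S,0} P$. The natural candidate for $P$ is the full weak superdevelopment $M\est{S,0}$ (which always exists, by the remark following its definition). Concretely, I would prove that $M \superdevelops{S,0} N$ implies $N \superdevelops{S,0} M\est{S,0}$ — a "triangle" property that immediately yields the diamond by taking $P = M\est{S,0}$. On the labeled side this triangle is essentially Lem.~\ref{lemma:completeSuperDevAndReduction}(1) combined with Lem.~\ref{lemma:SuperdevsCoincidenEnSuperdevCompleto}: from $A\et \superdevet{S,0} B\et$ one gets that $A\et\est{S,0}$ exists iff $B\et\est{S,0}$ does and they coincide (item 2 with $S_1=S_2=S$, $k_1=k_2=0$), and $B\et \superdevet{S,0} B\et\est{S,0} = A\et\est{S,0}$ by item~1 of Lem.~\ref{lemma:completeSuperDevAndReduction}; projecting through $\sinEtiq{\cdot}$ via Lem.~\ref{lemma:propertiesOfLabeledParallelReduction}(2),(3) gives the unlabeled triangle. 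Care is needed here because supersteps carry the binding-context parameter $S$ and the full superdevelopment is only defined relative to $S$; one must check that the $S$'s match up, which is why I would fix $S$ throughout and only invoke the lemmas with equal binding contexts.

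Finally, the diamond property of $\superdevelops{S,0}$ lifts routinely to the confluence of its transitive closure by the usual diagram-tiling argument, and since that transitive closure equals $\rred{}$ on \weakLC\ terms, \weakLC\ is confluent. The main obstacle I anticipate is not the high-level Tait–Martin-Löf scheme but the bookkeeping around the parameter $S$ and the interplay between labeled and unlabeled layers: one has to be careful that the full weak superdevelopment used to close the diamond is taken with respect to the correct binding context at the root (namely $S = \bPath{\text{ambient context}}$, or $S=\epsilon$ if we work with closed terms / at top level), and that the "triangle" lemma $A\superdevet{S,0} B \imp B \superdevet{S,0} A\est{S,0}$ is genuinely available — it follows from items (1) and (2) of Lem.~\ref{lemma:completeSuperDevAndReduction} and Lem.~\ref{lemma:SuperdevsCoincidenEnSuperdevCompleto}(2), but assembling these with matching $S$ and $k=0$ is the delicate point. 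Once that triangle is in hand, confluence is immediate.
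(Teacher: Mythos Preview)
Your overall strategy---the Tait--Martin-L\"of triangle using the full weak superdevelopment as the common reduct, combined with the sandwich between one-step reduction and supersteps---is exactly the paper's. The lemmas you cite for the triangle, Lem.~\ref{lemma:completeSuperDevAndReduction}(1) and Lem.~\ref{lemma:SuperdevsCoincidenEnSuperdevCompleto}(2), are precisely what the paper invokes, and the sandwich is Cor.~\ref{coroPropertiesOfLabeledParallelReductionInclusion}.

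The one genuine gap is where you project to the unlabeled relation. You aim for the diamond of $\superdevelops{S,0}$ and take $P = M\est{S,0}$, but $\est{}$ is defined only for \emph{labeled} terms; implicitly you mean $\sinEtiq{(M\et)\est{S,0}}$ for a labeling $M\et$ produced by Lem.~\ref{lemma:propertiesOfLabeledParallelReduction}(3). That lemma, however, only guarantees \emph{some} labeling per superstep, so two unlabeled supersteps $M \superdevelops{S,0} N_1$ and $M \superdevelops{S,0} N_2$ may lift to different sources $M\et^{(1)}, M\et^{(2)}$. Since whether $(\star)$ fires in the definition of $\est{}$ depends on the application label matching the head-abstraction label, $\sinEtiq{(M\et^{(1)})\est{S,0}}$ and $\sinEtiq{(M\et^{(2)})\est{S,0}}$ need not agree, and your diamond does not close as written. (Concretely: for $I(\lambda y.P)Q$, the outer application label must equal the label on $\lambda y$ for the labeled superstep to $P\sust{y}{Q}$ to exist---a constraint that depends on the particular superstep being lifted.) The paper avoids this by never descending to $\superdevelops{S,0}$: it establishes the diamond for the labeled $\superdevet{S,0}$, derives confluence of $\wreddDisj{S}$ directly via Cor.~\ref{coroPropertiesOfLabeledParallelReductionInclusion}, and only then passes to \weakLC, where lifting a peak of ordinary $\wred$-steps to a single fixed labeling is unproblematic. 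Your route is repairable (e.g.\ by fixing one labeling of $M$---say all labels $\star$---and arguing that every unlabeled superstep lifts to it), but Lem.~\ref{lemma:propertiesOfLabeledParallelReduction}(2),(3) alone do not deliver it.
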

    \begin{proof}
    Lem.~\ref{lemma:completeSuperDevAndReduction}(1) and
    Lem.~\ref{lemma:SuperdevsCoincidenEnSuperdevCompleto}(2) entail the
    diamond property of $\superdevet{S,0}$.
    This in turn entails confluence of \weakLC
    by Cor.~\ref{coroPropertiesOfLabeledParallelReductionInclusion}.
    \end{proof}

%

    \begin{lemma}
    \label{lemma:weakReductionToCompleteSuperdev}
    $A \wreddDisj{S} B$ implies $B \wreddDisj{S} A\est{S,0}$.
    \end{lemma}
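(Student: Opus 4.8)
The statement says: if $A \wreddDisj{S} B$ (a many-step labeled reduction away from $S$), then $B$ can be further reduced, still away from $S$, to the full weak superdevelopment $A\est{S,0}$. This is essentially a "strip lemma" / confluence-to-normal-form statement for labeled weak reduction, using that $A\est{S,0}$ is a kind of canonical endpoint. The plan is to induct on the length of the reduction $A \wreddDisj{S} B$.

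\emph{Base case.} If $A = B$ (zero steps), then we must show $A \wreddDisj{S} A\est{S,0}$, which is exactly Lem.~\ref{lemma:completeSuperDevAndReduction}(2).

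\emph{Inductive step.} Suppose $A \wredDisj{S} A_1 \wreddDisj{S} B$, so by the induction hypothesis $B \wreddDisj{S} A_1\est{S,0}$. It then suffices to show $A_1\est{S,0} = A\est{S,0}$, since then $B \wreddDisj{S} A_1\est{S,0} = A\est{S,0}$ as required. To get this equality I would invoke Lem.~\ref{lemma:SuperdevsCoincidenEnSuperdevCompleto}(3) (or, equivalently, (2)): it says that if $A_0 \superdevet{S,k} A_1$ then $A_0\est{S,k}$ exists iff $A_1\est{S,k}$ does and they coincide. So I need the single step $A \wredDisj{S} A_1$ to lift to a superstep $A \superdevet{S,0} A_1$; this is precisely the first inclusion of Cor.~\ref{coroPropertiesOfLabeledParallelReductionInclusion}, namely $\wredDisj{S} \subseteq \superdevet{S,0}$. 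Applying Lem.~\ref{lemma:SuperdevsCoincidenEnSuperdevCompleto}(3) with $k = 0$ then gives $A\est{S,0} = A_1\est{S,0}$ (both exist, since $A\est{S,0}$ always exists by the Remark after the definition of full weak superdevelopment), closing the induction.

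\emph{Main obstacle.} The only delicate point is a bookkeeping one about the side condition "away from $S$": the reduction $A \wreddDisj{S} B$ is genuine weak reduction (no bound variables substituted), and I must make sure the step $A\wredDisj{S}A_1$ really does yield $A \superdevet{S,0} A_1$ rather than needing a larger $k$. But this is exactly the content of Cor.~\ref{coroPropertiesOfLabeledParallelReductionInclusion}, so there is nothing new to prove. Everything else is routine chaining of previously established lemmas, so I expect the argument to be short.

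\begin{proof}
By induction on the length of the reduction $A \wreddDisj{S} B$.
If this reduction is empty, then $B = A$ and the claim is
Lem.~\ref{lemma:completeSuperDevAndReduction}(2).
Otherwise write $A \wredDisj{S} A_1 \wreddDisj{S} B$.
By Cor.~\ref{coroPropertiesOfLabeledParallelReductionInclusion},
$A \superdevet{S,0} A_1$, and since $A\est{S,0}$ always exists,
Lem.~\ref{lemma:SuperdevsCoincidenEnSuperdevCompleto}(2) (with
$S_1 = S_2 = S$ and $k_1 = k_2 = 0$) yields that $A_1\est{S,0}$
exists and $A_1\est{S,0} = A\est{S,0}$.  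By the inductive hypothesis
applied to $A_1 \wreddDisj{S} B$, we get $B \wreddDisj{S} A_1\est{S,0}
= A\est{S,0}$, as desired.
\end{proof}
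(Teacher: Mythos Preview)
Your proof is correct and follows essentially the same approach as the paper's. The only cosmetic difference is that the paper avoids the explicit induction: it directly lifts the whole reduction $A \wreddDisj{S} B$ to a sequence of supersteps via Cor.~\ref{coroPropertiesOfLabeledParallelReductionInclusion}, invokes Lem.~\ref{lemma:SuperdevsCoincidenEnSuperdevCompleto}(3) (the iterated form of the item you used) to conclude $A\est{S,0} = B\est{S,0}$, and then applies Lem.~\ref{lemma:completeSuperDevAndReduction}(2) to $B$ rather than to $A$.
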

    \begin{proof}
    Suppose $A \wreddDisj{S} B$.
    By Cor.~\ref{coroPropertiesOfLabeledParallelReductionInclusion}, $A \superdevett{S,0} B$.
    From Lem.~\ref{lemma:SuperdevsCoincidenEnSuperdevCompleto} we deduce
    $A\est{S,0} = B\est{S,0}$.
    Moreover, $B \wreddDisj{S} B\est{S,0} = A\est{S,0}$ from
    Lem.~\ref{lemma:completeSuperDevAndReduction}(2).
    \end{proof}

 Regarding the second item of Thm.~\ref{thm:equivalence},
  suppose $M\et \wreddDisj{S} N\et$ with $N\et$ in normal form.  From
  Lem.~\ref{lemma:weakReductionToCompleteSuperdev}, $N\et
  \wreddDisj{S} M\et\est{S,0}$.  Moreover, since $N\et$ is in normal
  form, it coincides with the complete weak superdevelopment of
  $M\et$, namely $N\et = M\et\est{S,0}$.  We conclude by resorting to
  Lem.~\ref{lemma:completeSuperDevAndReduction}(1) and then
  Lem.~\ref{lemma:propertiesOfLabeledParallelReduction}(2) to deduce
  $M \superdevelops{S,0} \sinEtiq{M\et\est{S,0}} = N$.


\section{Conclusions}

Redex creation in \weakLC is more subtle than in \LC. This raises the
question on how superdevelopments behave. We present two
possible definitions and prove that they are equivalent. The labeled
presentation is easy to grasp but complicated to use in proving
results (e.g. properties of reduction). Simultaneous reduction is easier
for this purpose. However, such an inductive definition requires
dealing with reduction under binders of redexes having free
occurrences of bound variables which labeled reduction forbids. This
makes the correspondence between these notions of reduction (labeled
and simultaneous) more demanding to prove, a task we have taken up in this
work.

We are currently developing these results in the framework of
higher-order rewriting (HOR). A number of issues arise in this
extended setting. First we must consider a notion of orthogonal HOR
systems for weak reduction, as discussed in the introduction. Second,
we have to determine what it means for a variable to be substituted in
order for reduction under binders of redexes involving these variables
to be allowed. Eg. in $\{ f (\lam{x}{g(y(x), z (x))}) \red{} f
(\lam{x}{g(y(a), z (x))}) \}$, redexes involving $x$ which occur below
$y$ (once this rule is instantiated) should be permitted but not those
below $z$. Last, there is an additional complication that is not
apparent in the setting of the lambda calculus. In the judgement $M
\superdevelops{S,n + 1} \lam{x}{M'}$ of \laRegla{App2} it turns out
that reduction steps involving free occurences of $x$ do not
contribute towards the creation of the outermost $\beta$ redex of
\laRegla{App2}. This requirement must be made explicit in HOR. The
naive generalization of $M \superdevelops{S,n + 1} \lam{x}{M'}$ to HOR
should require that the HOR-reduction steps involving free occurrences
of $x$ not contribute towards the newly created outermost redex.

\textbf{Acknowledgements.} To the referees for comments that helped
improve this paper.


\bibliographystyle{eptcs}
\bibliography{biblio}


\ignore{
\newpage
\appendix
\input{appendix}
}
\end{document}